\setlist[enumerate,1]{label=(\roman*)}
\title{Unbalanced Growth and Land Overvaluation}
\author{Tomohiro Hirano\thanks{Department of Economics, Royal Holloway, University of London and The Canon Institute for Global Studies. Email: \href{mailto:tomohiro.hirano@rhul.ac.uk}{tomohiro.hirano@rhul.ac.uk}.} \and Alexis Akira Toda\thanks{Department of Economics, Emory University. Email: \href{mailto:alexis.akira.toda@emory.edu}{alexis.akira.toda@emory.edu}.}}
\numberwithin{equation}{section}
\numberwithin{lem}{section}
\begin{document}

\maketitle

\begin{abstract}

Historical trends suggest the decline in importance of land as a production factor but its continued importance as a store of value. Using an overlapping generations model with land and aggregate uncertainty, we theoretically study the long-run behavior of land prices and identify economic conditions under which land becomes overvalued on the long-run trend relative to the fundamentals defined by the present value of land rents. Unbalanced growth together with the elasticity of substitution between production factors plays a critical role. Around the trend, land prices exhibit recurrent stochastic fluctuations, with expansions and contractions in the size of land overvaluation.

\medskip

\textbf{Keywords:} aggregate uncertainty, bubble, elasticity of substitution, land price, unbalanced growth.

\medskip

\textbf{JEL codes:} D53, G12, O41.
\end{abstract}

\section{Introduction}

As economies develop and per capita incomes rise, the importance of land as a factor of production diminishes. Figure \ref{fig:US} plots the employment share of agriculture in the United States against time (left panel) and real per capita GDP (right panel). These graphs show that the employment share of agriculture has significantly declined over the past three centuries along economic development.\footnote{See Appendix \ref{subsec:data_US} for data description. Figure \ref{fig:US} updates \citet[Figure 2]{KongsamutRebeloXie2001}.} The trend is similar if we define the ``land-intensive'' sector by combining agriculture, mining, and construction. This trend is not specific to the United States: Figure \ref{fig:country} plots the GDP share of agriculture (left panel) and the employment share of agriculture (right panel) against per capita GDP for 173 countries in 2023.\footnote{See Appendix \ref{subsec:data_country} for data description. Figure \ref{fig:country} updates \citet[Figure 2]{Echevarria1997}.}

\begin{figure}[!htb]
    \centering
    \includegraphics[width=0.48\linewidth]{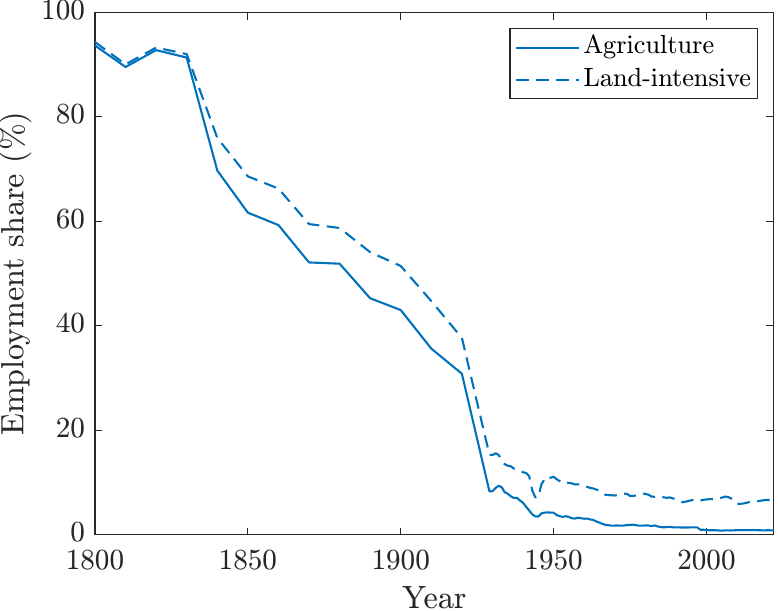}
    \includegraphics[width=0.48\linewidth]{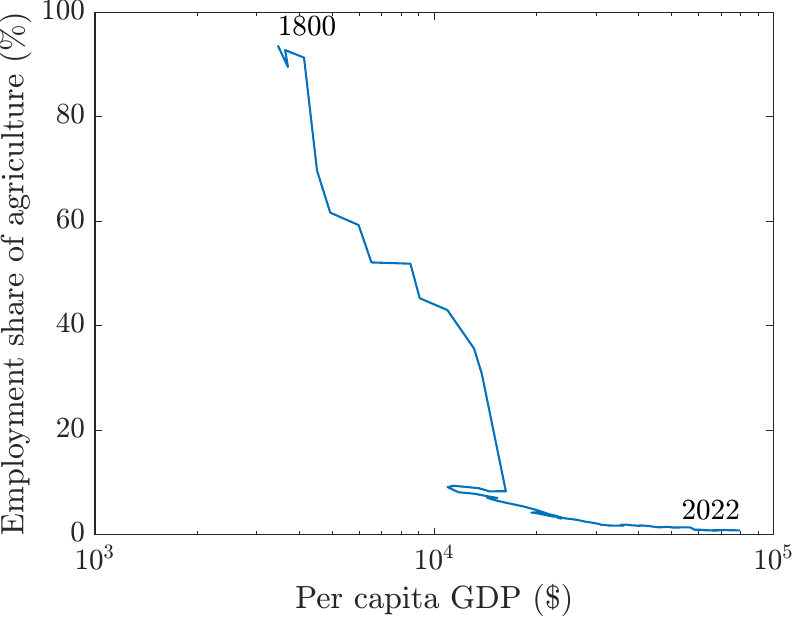}
    \caption{Employment share of land-intensive sectors in the United States.}
    \label{fig:US}
\end{figure}
\begin{figure}[!htb]
    \centering
    \includegraphics[width=0.48\linewidth]{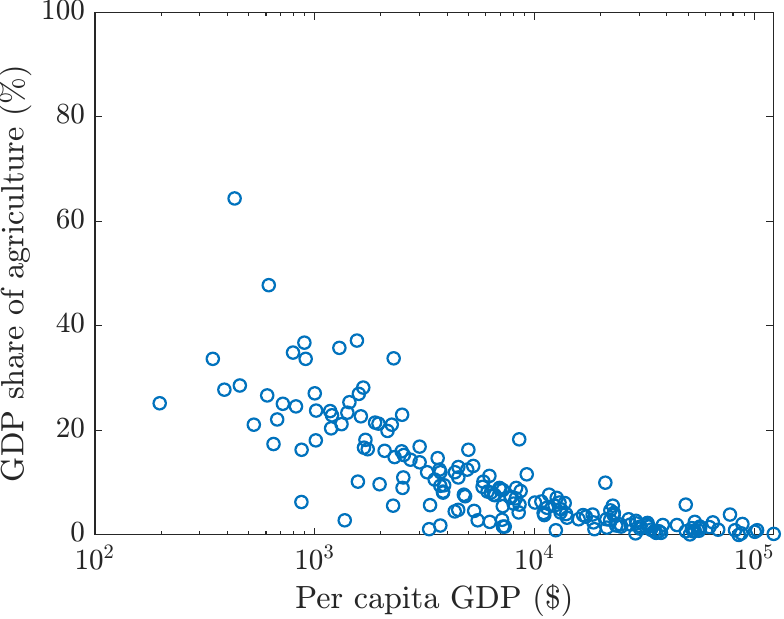}
    \includegraphics[width=0.48\linewidth]{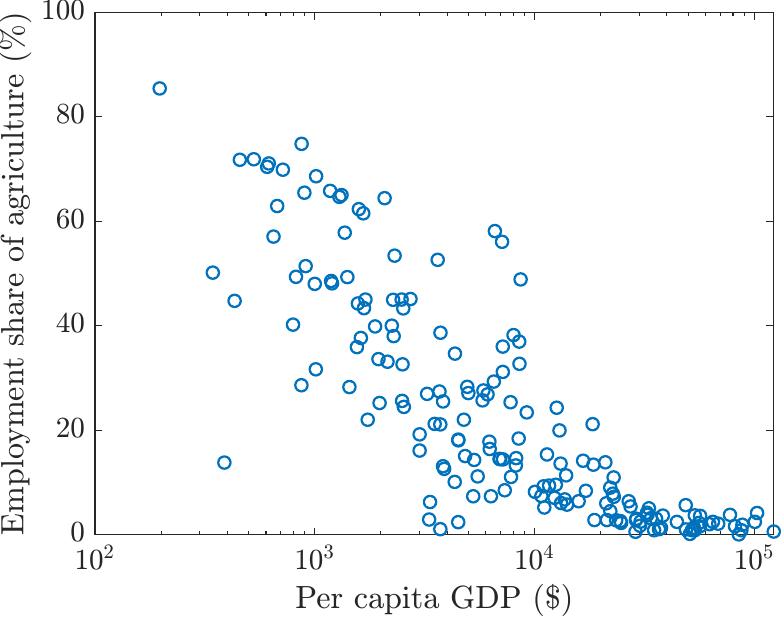}
    \caption{GDP and employment share of agriculture across countries in 2023.}
    \label{fig:country}
\end{figure}

The decline in the importance of land as a factor of production is partly due to biological constraints regarding the amount of food people can consume (where land produces agricultural products) or the amount of leisure time people can spend (where land produces amenities like tennis courts and national parks). Although people living in modern capitalistic societies have tremendously benefited from technological progress over the past decades such as the development of computers, Internet, smartphones, and electric vehicles, introspection suggests that our dining and outdoor experiences---the quality of ``land-intensive products''---have not changed much.

At the same time, land also plays a significant role as a scarce means of savings. Figure \ref{fig:OECD}, which reproduces \citet[Figure 2.1]{OECD2022},\footnote{Data are available at \url{https://stat.link/z6oj0i}.} shows that among 29 OECD countries, real estate (owner-occupied housing and secondary real estate) comprises more than 50\% of household wealth in 27 countries. Land possesses a few characteristics that make it suitable as a store of value. First, unlike cryptocurrency, land has an intrinsic value because it can be used as a factor of production in agriculture, construction, housing, and leisure. Second, unlike gold (which is chemically homogeneous), each land parcel is immobile and unique and hence property rights are well-defined, which makes it difficult to steal. Third, relative to durable goods such as vehicles, land is more durable as it cannot be destroyed absent natural disasters, sea level rise, and pollution.

\begin{figure}[!htb]
    \centering
    \includegraphics[width=0.7\linewidth]{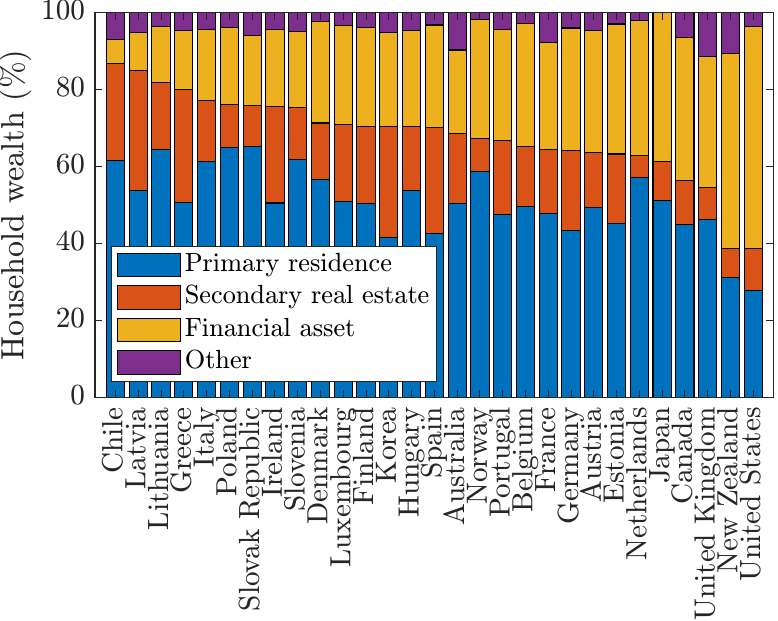}
    \caption{Composition of gross household wealth in OECD countries.}
    \label{fig:OECD}
\end{figure}

This paper theoretically studies the long-run behavior of land prices in modern economies where the importance of land as a factor of production diminishes, yet, land remains to play a significant role as a store of value. In a plausible economic model with land and aggregate risk, we establish a theorem showing the tight link between unbalanced (uneven) productivity growth, elasticity of substitution between production factors, and overvaluation of land, meaning that the equilibrium land price exceeds its fundamental value defined by the present value of land rents.

To derive these findings, we consider a two-period stochastic overlapping generations (OLG) model with land and establish the Land Overvaluation Theorem. We identify economic conditions under which land overvaluation will necessarily emerge in equilibrium. Let us denote the labor and land productivities at time $t$ by $A_{Ht}$ and $A_{Xt}$, respectively. Let us also denote by $\sigma$ (a lower bound of) the elasticity of substitution between land and labor at sufficiently high input levels and assume $\sigma>1$. The main result of this paper, Theorem \ref{thm:main}, shows that if
\begin{equation*}
    \E_0\sum_{t=1}^\infty (A_{Ht}/A_{Xt})^{1/\sigma-1}<\infty,
\end{equation*}
then land is overvalued in equilibrium. Noting that $\sigma>1$ and hence $1/\sigma-1<0$, this land overvaluation condition holds whenever labor productivity $A_{Ht}$ grows faster than land productivity $A_{Xt}$ in the long run, \ie, unbalanced growth occurs. The intuition is as follows. Because the young save a fraction of income by purchasing land, along the equilibrium path, the land price increases together with wages, whose growth rate will be the same as labor productivity growth. On the other hand, the growth rate of land rents will be suppressed if the elasticity of substitution between land and labor exceeds 1, in which case land prices, pulled up by growing incomes, grow faster than land rents and therefore land will be overvalued, with an upward trend in the price-rent ratio.\footnote{To be precise, the ratio increases over time depending on whether the elasticity of substitution exceeds 1 at high input levels, not necessarily globally. In \S\ref{subsec:general_ES}, we justify our assumption of $\sigma>1$ in several ways based on both empirical and theoretical grounds.} Moreover, since the equilibrium is unique, land overvaluation necessarily emerges.

There are three important implications to be drawn from our Land Overvaluation Theorem. First, our analysis illustrates the key mechanism of how land overvaluation emerges, where unbalanced growth and elasticity of substitution play a crucial role. Second, unlike the usual perspective on land overvaluation (sometimes called land bubbles) as short-run phenomena with boom-bust cycles, our analysis shows the emergence of land overvaluation on the long-run trend along economic development. In reality, as economies develop, structural transformation occurs from the land-intensive agricultural economy to the labor- or knowledge-intensive economy. During this transition, while the importance of land as a factor of production diminishes, as long as land remains important as a store of value, land necessarily becomes overvalued. To illustrate this point clearly, in \S\ref{subsec:two-sector}, we present a two-sector model as a special case and show that land overvaluation will occur as the unique equilibrium outcome in the major historical trend of the transition from a Malthusian economy to a modern economy. In addition to the shift from the agricultural economy to the labor- or knowledge-intensive economy, land use is also changing. In \S\ref{subsec:urban}, we present another special case in which land overvaluation emerges with urban formation with unbalanced growth. Third, our theorem in an economy with aggregate risk also provides a new insight on short-term fluctuations that deviate from the long-run trend. When productivities fluctuate, so do land prices. In standard asset pricing models, these valuations and fluctuations always reflect fundamentals. In contrast, our analysis shows that land is always overvalued in the long-run trend, associated with expansions and contractions in the size of overvaluation over short periods of time that may appear to be the emergence and collapse of large land bubbles. Our model provides a theoretical foundation for recurrent stochastic bubbles. We emphasize that this insight can only be obtained in an economy with aggregate risk. In a deterministic economy, if the land price-rent ratio converges, it implies that the land price reflects fundamentals. In contrast, in an economy with aggregate risk, even if the ratio appears to be stable for an extended period of time, it does not necessarily mean land prices reflect fundamentals. So long as the condition satisfying the Theorem is satisfied, the price of land is always overvalued, and the extent of the overvaluation is constantly changing. \S\ref{subsec:recurrent} illustrates these points by considering that productivity changes in labor and land stochastically evolve according to a Markov chain.

\section{Related literature}\label{subsec:literature}
We employ a standard two-period OLG model with land as in \citet{McCallum1987,Rhee1991,HansenPrescott2002,Mountford2004,Stiglitz2015}, where land plays the dual role of factor of production and store of value. Our paper is different because we focus on asset pricing, unbalanced growth, and land overvaluation. \citet{Rhee1991} shows the importance of elasticity of substitution exceeding one for the possibility of dynamic inefficiency.\footnote{Note that some of the analysis in \citet{Rhee1991} is loose, as his Assumption A involves endogenous objects and the proof is claimed to be equivalent to \citet{Tirole1985}, which also involves high-level assumptions as discussed in \citet[\S5.2]{HiranoToda2024JME}.} In contrast, we show the importance of this condition for the necessity of land overvaluation but dynamic inefficiency is irrelevant.\footnote{Following \citet{Diamond1965,Tirole1985}, it is often noted that dynamic inefficiency is important for the existence of asset bubbles. A typical approach considers a hypothetical economy in which there is no dividend-paying asset like land, and then shows that under some conditions, dynamic inefficiency can arise, in which case pure bubble assets can circulate. However, once we introduce land from the beginning, under some conditions land bubbles emerge as the unique equilibrium outcome, so it is meaningless to think about a bubbleless world.}

There is also a literature that emphasizes the role of land as collateral \citep{KiyotakiMoore1997,Krishnamurthy2003,Lorenzoni2008,Kocherlakota2013}. We abstract from this role to clarify the mechanism of how land overvaluation necessarily occurs in a bare-bones model where land plays the dual role of production factor and store of value. While land prices reflect fundamentals in \citet{KiyotakiMoore1997,Lorenzoni2008}, land is intrinsically useless like fiat money or cryptocurrencies and may have a positive value in \citet{Kocherlakota2013}, so land can be overvalued. In contrast, in our model, land is a productive asset used as an input for production, in which case land is shown to be inevitably overvalued. We identify the underlying economic conditions.

As in \citet{Lucas1978} and the large subsequent literature, we study asset pricing in an economy with aggregate uncertainty. In macro-finance, it is well known since \citet{SantosWoodford1997} that there is a fundamental difficulty in generating asset overvaluation (sometimes called asset bubbles) in dividend-paying assets.\footnote{\label{fn:difficulty}Bubbles cannot arise in rational equilibrium models if dividends comprise a non-negligible fraction of aggregate endowments. See \citet[Theorem 3.3]{SantosWoodford1997} and \citet[\S3.4]{HiranoToda2024JME} for details.} Perhaps due to this difficulty, progress in macro-finance models that describe realistic asset overvaluation in land, housing, and stocks has been slow. Indeed, the literature on rational bubbles has almost exclusively focused on pure bubbles like fiat money. However, applications to policy, quantitative, and empirical analyses using pure bubble models are severely limited due to the following reasons. First, in reality, it is hard to find pure bubble assets other than fiat money or cryptocurrencies. Second, there exist a continuum of bubble equilibria as well as fundamental equilibria, which makes model predictions non-robust.\footnote{This statement is often true but not always, as  \citet{Scheinkman1980,Santos1990} provide counterexamples of indeterminacy in endowment economies. \citet{HiranoToda2024EL} examine if their result also holds in production economies and prove that there exist a continuum of monetary equilibria.} Third, the price-dividend ratio, which is used as a bubble detector in the econometric literature \citep{PhillipsShiYu2015,PhillipsShi2018}, cannot even be defined. Hence, connecting pure bubble models to the bubble detection literature is impossible. \citet[\S7]{Wilson1981}, \citet[\S6.1.2]{LeVanPham2016}, and \citet[\S4.1]{BosiLeVanPham2022} provide examples of asset price bubbles when dividends shrink relative to the economy. \citet{Tirole1985,BosiHa-HuyLeVanPhamPham2018} study asset price bubbles in an OLG model with capital accumulation. In a recent paper, \citet{HiranoToda2025JPE} identify the economic conditions under which asset price bubbles attached to dividend-paying assets necessarily emerge and establish the concept of the necessity of bubbles in modern macro-finance models including Bewley-type infinite-horizon models.\footnote{For more discussion of these contributions, see \citet[\S5]{HiranoToda2024JME}.} Since land in our model is used as a factor of production yielding positive rents, land may be interpreted as a Lucas tree with endogenous dividends. We build on the insight of \citet{HiranoToda2025JPE} but the crucial differences are that we make the land rent \emph{endogenous} and allow for \emph{aggregate uncertainty}.

Concerning unbalanced growth, \citet{Baumol1967} points out the implications for economic development when different sectors have different productivity growth rates. \citet{HansenPrescott2002} consider a two-sector OLG model with uneven productivity growth rates across the capital-intensive (Solow) sector and the land-intensive (Malthus) sector and argue that land becomes unimportant as a factor of production as the economy develops. \citet{AcemogluGuerrieri2008} show in a two-sector model that differences in factor proportions across different sectors combined with capital deepening leads to unbalanced growth. The elasticity of substitution between the two sectors play a key role for growth dynamics. \citet{Matsuyama1992,BueraKaboski2012,Boppart2014,FujiwaraMatsuyama2024} use non-homothetic preferences to generate unbalanced growth. A crucial difference between our work and this literature is that we show the tight theoretical link between unbalanced growth, elasticity of substitution, and land overvaluation, while the literature abstracts from asset pricing. In contrast, we clarify the mechanism of how unbalanced growth dynamics plays a key role in generating asset overvaluation. Finally, our example in \S\ref{subsec:urban} is also related to the literature on urban formation \citep{Krugman1991, GlaeserHediScheinkmanShleifer1992}. Obviously, urban formation is characterized by unbalanced growth. While this literature abstracts from asset pricing, our Theorem can be applied to study urban land prices.

\section{Fundamental value and bubble}

Our paper studies land overvaluation. To this end, we need to define the fundamental value of assets and asset price bubbles. Although the definitions are standard, which we discussed elsewhere \citet{HiranoToda2024JME,HiranoToda2025JPE}, to make the paper self-contained, we present the definitions in a general setting following \citet{HiranoTodaClarification}.

Consider an infinite-horizon economy with a homogeneous good and time indexed by $t=0,1,\dotsc$. Consider an asset with infinite maturity that pays dividend $D_t\ge 0$ and trades at ex-dividend price $P_t$, both in units of the time-$t$ good. Letting $m_{t\to t+1}$ be the stochastic discount factor (SDF) of an agent holding the asset at time $t$, the no-arbitrage asset pricing equation is given by
\begin{equation}
    P_t = \E_t[m_{t\to t+1}(P_{t+1}+D_{t+1})], \label{eq:noarbitrage}
\end{equation}
where $\E_t[\cdot]$ denotes the expectation conditional on time $t$ information. Define the state price deflator by $\pi_0=1$ and $\pi_t=\prod_{s=0}^{t-1}m_{s\to s+1}$ for $t\ge 1$. Multiplying $\pi_t$ to both sides of \eqref{eq:noarbitrage} and using the definition of $\pi_t$, we obtain
\begin{equation*}
    \pi_t P_t = \E_t[\pi_{t+1}(P_{t+1}+D_{t+1})].
\end{equation*}
Iterating forward yields
\begin{equation}
    \pi_tP_t=\E_t\sum_{s=t+1}^T \pi_sD_s+\E_t[\pi_TP_T]. \label{eq:P_iter}
\end{equation}
Because all terms are nonnegative, the sum in \eqref{eq:P_iter} from $s=t+1$ to $s=T$ is increasing in $T$ and bounded above by $\pi_tP_t$, so it converges almost surely as $T\to\infty$. Therefore the \emph{fundamental value} of the asset (the present value of dividends)
\begin{equation}
    V_t\coloneqq \frac{1}{\pi_t}\E_t\sum_{s=t+1}^\infty \pi_sD_s \label{eq:Vt}
\end{equation}
is well-defined. Letting $T\to\infty$ in \eqref{eq:P_iter}, we obtain $P_t=V_t+B_t$, where we define the \emph{asset price bubble} as
\begin{equation}
    B_t\coloneqq \lim_{T\to\infty} \frac{1}{\pi_t}\E_t[\pi_TP_T]\ge 0.\label{eq:Bt}
\end{equation}
Two remarks are in order. First, the economic meaning of the bubble component $B_t$ in \eqref{eq:Bt} is that it captures a speculative aspect, that is, agents buy the asset now for the purpose of resale in the future, rather than for the purpose of receiving dividends. When $\lim_{T\to\infty}\E_t[\pi_TP_T]=0,$ the aspect of speculation becomes negligible and asset prices are determined only by factors that are backed in equilibrium, namely future dividends. On the other hand, if $\lim_{T\to\infty}\E_t[\pi_TP_T]>0$, equilibrium asset prices contain a speculative aspect. Second, although the definition is the same, there is a discontinuity in proving the existence of a bubble between the cases with $D_t=0,$ \ie, fiat money or cryptocurrencies, and $D_t>0$ such as land, housing or stocks. In other words, as noted in Footnote \ref{fn:difficulty}, there is a fundamental difficulty in generating a bubble attached to an asset with $D_t>0$.

\section{Land Overvaluation Theorem}\label{sec:general}

In what follows, following \citet{Baumol1967}, we refer to a situation with uneven productivity growth between different production factors or different sectors as ``unbalanced growth''. In this section, we uncover the mechanism of how land overvaluation necessarily emerges with unbalanced growth dynamics. We also highlight the role of the elasticity of substitution between land and other production factors.

\subsection{Model}

We consider a stochastic two-period OLG model. Uncertainty is resolved according to a filtration $\set{\mathcal{F}_t}_{t=0}^\infty$ over a probability space $(\Omega,\mathcal{F},P)$. We denote conditional expectations by $\E_t[\cdot]=\E[\cdot \mid \mathcal{F}_t]$.

\paragraph{Preferences}

At each time $t$, a unit mass of agents are born, who live for two periods and derive utility
\begin{equation}
    (1-\beta)\log c_t^y+\beta \E_t[\log c_{t+1}^o]\label{eq:CD}
\end{equation}
from consumption $(c_t^y,c_{t+1}^o)$ when young and old, where $\beta\in (0,1)$ dictates time preference.\footnote{The logarithmic (Cobb-Douglas) utility \eqref{eq:CD} is convenient for obtaining closed-form solutions. This assumption is not essential, as shown by \citet[Theorem 2]{HiranoToda2025JPE}.} Each period, the young are endowed with one unit of labor, while the old have none. At $t=0$, there is a unit mass of initial old agents who only care about their consumption $c_0^o$. The initial old is endowed with a unit supply of land, which is durable and non-reproducible.

\paragraph{Technologies}

There are two factors of production, labor (denoted by $H$) and land (denoted by $X$). Although we call $H$ ``labor'', it may be interpreted as human or intangible capital in a modern economy. To illustrate the point of how unbalanced growth necessarily generates land overvaluation, we focus on human capital and abstract from physical capital such as buildings and machinery.\footnote{For a model in which both physical capital and land serve as a store of value, see \citet[\S6]{HiranoTodaReal}.} As we will show below, the point that different factors for production have different productivity growth rates plays a crucial role, as well as the elasticity of substitution between them.

Without loss of generality, we only specify the aggregate production function, as it is well known that if each sector or firm is competitive and markets are frictionless, profit maximization at the individual and aggregate levels are equivalent. (See the example in \S\ref{subsec:two-sector} below.) Below, we say that a production function $F(H,X)$ is \emph{neoclassical} if $F:\R_{++}^2\to \R_{++}$ is homogeneous of degree 1, concave, continuously differentiable, and satisfies $F_H,F_X>0$.

\begin{asmp}\label{asmp:F}
The time $t$ aggregate production function takes the form
\begin{equation*}
	F_t(H,X)=F(A_{Ht}H,A_{Xt}X),
\end{equation*}
where $F$ is a neoclassical production function and $A_{Ht},A_{Xt}>0$ are $\mathcal{F}_t$-measurable factor-augmenting productivities.
\end{asmp}

\paragraph{Equilibrium}

Let $w_t>0$ be the wage, $r_t>0$ the land rent, and $P_t>0$ the land price (excluding rent), all in units of the current consumption good. Generation $t$ seeks to maximize the utility \eqref{eq:CD} subject to the budget constraints
\begin{subequations}\label{eq:budget_OLG}
\begin{align}
    &\text{Young:} & c_t^y+P_tx_t&=w_t, \label{eq:budget_young}\\
    &\text{Old:} & c_{t+1}^o&=(P_{t+1}+r_{t+1})x_t, \label{eq:budget_old}
\end{align}
\end{subequations}
where $x_t$ is the demand for land. The definition of equilibrium is standard.

\begin{defn}\label{defn:eq}
A \emph{rational expectations equilibrium} consists of adapted processes of prices $\set{(w_t,r_t,P_t)}_{t=0}^\infty$ and allocations $\set{(c_t^y,c_t^o,x_t,H_t,X_t)}_{t=0}^\infty$, such that, for each $t$,
\begin{enumerate}
    \item (Utility maximization) $(c_t^y,c_{t+1}^o,x_t)$ maximizes the utility \eqref{eq:CD} subject to the budget constraints \eqref{eq:budget_OLG},
    \item (Profit maximization) $(H_t,X_t)$ maximizes the profit $F_t(H_t,X_t)-w_tH_t-r_tX_t$,
    \item (Market clearing) $H_t=1$, $X_t=1=x_t$, and $c_t^y+c_t^o=F_t(H_t,X_t)$.
\end{enumerate}
\end{defn}

Due to log utility, the existence and uniqueness of equilibrium are immediate. Here and elsewhere, we denote partial derivatives using subscripts, \eg, $F_H\coloneqq \partial F/\partial H$. Below, all proofs are deferred to Appendix \ref{sec:proof}.

\begin{prop}\label{prop:eq}
If Assumption \ref{asmp:F} holds, then the economy has a unique equilibrium, which is characterized by the following equations.
\begin{subequations}\label{eq:eq_obj}
\begin{align}
    &\text{Wage:} & w_t&=F_H(A_{Ht},A_{Xt})A_{Ht}, \label{eq:w}\\
    &\text{Rent:} & r_t&=F_X(A_{Ht},A_{Xt})A_{Xt}, \label{eq:r}\\
    &\text{Land price:} & P_t&=\beta w_t, \label{eq:P}\\
    &\text{Young consumption:} & c_t^y&=(1-\beta)w_t, \label{eq:y}\\
    &\text{Old consumption:} & c_t^o&=\beta w_t+r_t. \label{eq:z}
\end{align}
\end{subequations}
\end{prop}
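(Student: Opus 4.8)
The plan is to verify the characterization \eqref{eq:eq_obj} directly from the three equilibrium conditions in Definition \ref{defn:eq} and then observe that these equations pin down a unique adapted process, which gives both existence and uniqueness at once.

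First I would dispose of profit maximization. Because $F$ is neoclassical, $F_t(H,X)=F(A_{Ht}H,A_{Xt}X)$ is concave in $(H,X)$ and continuously differentiable, so the firm's first-order conditions $w_t=\partial F_t/\partial H$ and $r_t=\partial F_t/\partial X$ are both necessary and sufficient for an (interior) optimum. Evaluating these at the market-clearing quantities $H_t=X_t=1$ and applying the chain rule gives \eqref{eq:w} and \eqref{eq:r}. Euler's theorem (homogeneity of degree one) additionally yields $F_t(1,1)=w_t+r_t$, which I will use to check the goods market.

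Next I would solve the household's problem. Substituting the budget constraints \eqref{eq:budget_young}--\eqref{eq:budget_old} into \eqref{eq:CD}, generation $t$ chooses $x_t$ to maximize $(1-\beta)\log(w_t-P_tx_t)+\beta\log x_t+\beta\E_t[\log(P_{t+1}+r_{t+1})]$; the key simplification is that, once the log of the old-age budget is split, the stochastic gross return $P_{t+1}+r_{t+1}$ enters additively and is independent of the choice variable, so it drops out of the first-order condition. The remaining objective is strictly concave on $(0,w_t/P_t)$ and tends to $-\infty$ at both endpoints, so the optimum is interior and solves $\beta(w_t-P_tx_t)=(1-\beta)P_tx_t$, i.e. $P_tx_t=\beta w_t$: the young save the constant fraction $\beta$ of the wage, and land is their only savings vehicle. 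Imposing market clearing $x_t=1$ then gives \eqref{eq:P}, whence $c_t^y=w_t-P_tx_t=(1-\beta)w_t$ is \eqref{eq:y}, and for the old $c_t^o=(P_t+r_t)x_{t-1}=P_t+r_t=\beta w_t+r_t$ is \eqref{eq:z} (using the unit land endowment of the initial old for $t=0$). Adding $c_t^y$ and $c_t^o$ and invoking $F_t(1,1)=w_t+r_t$ confirms the goods market clears.

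Since $A_{Ht},A_{Xt}$ are $\mathcal{F}_t$-measurable and $F\in C^1$, every object in \eqref{eq:eq_obj} is $\mathcal{F}_t$-measurable, so the process constructed above is an equilibrium; conversely, any equilibrium must satisfy exactly these equations, so it is unique. I do not expect a real obstacle here — the only points needing care are that the firm's and household's first-order conditions are \emph{sufficient} (by concavity), not merely necessary, and that the logarithmic specification genuinely decouples the savings rate from the return distribution, which is precisely what makes existence and uniqueness "immediate" as claimed in the text.
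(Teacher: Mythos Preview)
Your proposal is correct and follows essentially the same route as the paper: profit maximization gives \eqref{eq:w}--\eqref{eq:r}, and the log utility makes the expected return term additively separable so the savings decision reduces to a deterministic concave problem with unique solution $P_tx_t=\beta w_t$, after which market clearing pins down everything. The only cosmetic difference is that the paper optimizes over $c_t^y$ (combining the two budget constraints via the gross return $R_{t+1}$) while you optimize over $x_t$; these are equivalent parametrizations, and your version is arguably a touch more complete since you explicitly verify goods-market clearing via Euler's theorem and check adaptedness.
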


\subsection{Elasticity of substitution}\label{subsec:general_ES}

The elasticity of substitution turns out to play a crucial role in generating land overvaluation. Recall that the elasticity of substitution $\sigma$ between production factors is defined by the percentage change in relative factor inputs with respect to the percentage change in relative factor prices
\begin{equation}
    \sigma=-\frac{\partial \log (H/X)}{\partial \log (w/r)}, \label{eq:ES}
\end{equation}
where the derivative is taken along the production possibility frontier $F(H,X)=\text{constant}$. (See Appendix \ref{subsec:proof_ES} for details.) For neoclassical production functions, the elasticity of substitution is given by $\sigma_F=(F_HF_X)/(FF_{HX})$ (Lemma \ref{lem:sigma}).

To derive asset pricing implications, we restrict $\sigma_F$ as follows.

\begin{asmp}\label{asmp:ES}
The elasticity of substitution of the neoclassical production function $F$ exceeds 1 at high input levels:
\begin{equation*}
    \liminf_{H\to\infty}\sigma_F(H,1)>\sigma>1.
\end{equation*}
\end{asmp}

We justify Assumption \ref{asmp:ES} in several ways.

The first justification is empirical. \citet*{EppleGordonSieg2010} find that the elasticity of substitution between land and non-land factors for producing housing service is 1.16 for residential properties and 1.39 for commercial properties in Allegheny County, Pennsylvania. \citet{AhlfeldtMcMillen2014} argue that the estimation approach of \citet*{EppleGordonSieg2010} is less susceptible to measurement error than older estimates, which are likely biased downwards. They find that the elasticity of substitution is around 1.25 for Chicago and Berlin. 

The second justification is the pathological behavior of interest rates with $\sigma<1$. To see this, suppose for simplicity that the production function exhibits constant elasticity of substitution (CES)
\begin{equation}
    F(H,X)=\left(\alpha H^{1-\rho}+(1-\alpha)X^{1-\rho}\right)^\frac{1}{1-\rho}, \label{eq:CES}
\end{equation}
where $\sigma=1/\rho$ is the elasticity of substitution and $\alpha\in (0,1)$. Suppose productivities are given by
\begin{equation}
    (A_{Ht},A_{Xt})=(G_H^t,G_X^t) \label{eq:AHX}
\end{equation}
with $G_H>G_X$, and assume $\sigma<1$. $G_H>G_X$ means that the productivity growth rate of human capital is greater than that of land. Using \eqref{eq:eq_obj}, we can bound the gross risk-free rate from below as
\begin{equation*}
    R_t=\frac{\beta w_t+r_t}{\beta w_{t-1}}\ge \frac{r_t}{\beta w_{t-1}}.
\end{equation*}
Using \eqref{eq:w} \eqref{eq:r}, \eqref{eq:CES}, \eqref{eq:AHX}, and $\rho>1$, we obtain
\begin{align*}
    w_t&=\alpha\left(\alpha  G_H^{(1-\rho)t}+(1-\alpha)G_X^{(1-\rho)t}\right)^\frac{\rho}{1-\rho}G_H^{(1-\rho)t}\\
    &\sim \alpha(1-\alpha)^\frac{\rho}{1-\rho}G_H^{(1-\rho)t}G_X^{\rho t},\\
    r_t&=(1-\alpha)\left(\alpha  G_H^{(1-\rho)t}+(1-\alpha)G_X^{(1-\rho)t}\right)^\frac{\rho}{1-\rho}G_X^{(1-\rho)t}\\
    &\sim (1-\alpha)^\frac{1}{1-\rho}G_X^t.
\end{align*}
Therefore
\begin{equation*}
    R_t\ge \frac{r_t}{\beta w_{t-1}}\sim \frac{1-\alpha}{\alpha\beta}G_X(G_H/G_X)^{(\rho-1)(t-1)}\to \infty
\end{equation*} 
because $G_H>G_X$ and $\rho>1$. An interest rate diverging to infinity is counterfactual and pathological.

\subsection{Unbalanced growth and land overvaluation}

We now establish Land Overvaluation Theorem as the main result of this paper.

\begin{thm}[Land Overvaluation]\label{thm:main}
Suppose Assumptions \ref{asmp:F}, \ref{asmp:ES} hold and
\begin{equation}
    \E_0\sum_{t=1}^\infty (A_{Ht}/A_{Xt})^{1/\sigma-1}<\infty\label{eq:cond_A}
\end{equation}
almost surely. Then land is overvalued in equilibrium.
\end{thm}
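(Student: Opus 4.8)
The plan is to compute the bubble component $B_0$ in \eqref{eq:Bt} from the equilibrium of Proposition \ref{prop:eq} and reduce the assertion that land is overvalued (\ie, $P_0>V_0$, equivalently $B_0>0$) to the almost sure summability of the rent-price ratios $\sum_{t\ge1}r_t/P_t$. The young agent's first-order condition for land, together with market clearing $x_t=1$ and $P_t=\beta w_t$ from \eqref{eq:P}, yields the stochastic discount factor $m_{t\to t+1}=\beta w_t/(P_{t+1}+r_{t+1})=P_t/(P_{t+1}+r_{t+1})$, so the state price deflator telescopes:
\begin{equation*}
\pi_TP_T=P_0\prod_{t=1}^T\frac{1}{1+r_t/P_t}.
\end{equation*}
This is nonincreasing in $T$ and dominated by the $\mathcal F_0$-measurable random variable $P_0$, so conditional dominated convergence gives $B_0=\lim_{T\to\infty}\E_0[\pi_TP_T]=\E_0\bigl[P_0\prod_{t=1}^\infty(1+r_t/P_t)^{-1}\bigr]$. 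Since for nonnegative $a_t$ one has $\prod_{t\ge1}(1+a_t)^{-1}>0$ if and only if $\sum_{t\ge1}a_t<\infty$, it suffices to prove $\sum_{t\ge1}r_t/P_t<\infty$ almost surely: then the integrand is almost surely positive, whence $B_0>0$. (Running the same computation from any date $t$ gives $B_t>0$, so land is overvalued at every date.)

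Next I would express $r_t/P_t$ through factor income shares. Writing $h_t\coloneqq A_{Ht}/A_{Xt}$ and $f(h)\coloneqq F(h,1)$, degree-one homogeneity gives $F_H(A_{Ht},A_{Xt})=f'(h_t)$ and, by Euler's theorem, $F_X(A_{Ht},A_{Xt})=f(h_t)-h_tf'(h_t)$, so \eqref{eq:w}--\eqref{eq:P} give
\begin{equation*}
\frac{r_t}{P_t}=\frac1\beta\cdot\frac{\theta(h_t)}{1-\theta(h_t)},\qquad\theta(h)\coloneqq\frac{f(h)-hf'(h)}{f(h)}\in(0,1),
\end{equation*}
where $\theta$ is the land income share. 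The key analytic input is the bound $\theta(h)=O(h^{1/\sigma-1})$ as $h\to\infty$. Using $F_{HX}(h,1)=-hf''(h)$ together with Lemma \ref{lem:sigma} and differentiating $\theta$ yields the share ODE
\begin{equation*}
\frac{\theta'(h)}{\theta(h)}=\frac{1-\theta(h)}{h}\left(\frac{1}{\sigma_F(h,1)}-1\right).
\end{equation*}
By Assumption \ref{asmp:ES} fix $\sigma'\in\bigl(\sigma,\liminf_{h\to\infty}\sigma_F(h,1)\bigr)$ and $h_0$ with $\sigma_F(h,1)\ge\sigma'$ for $h\ge h_0$; then the right-hand side is strictly negative there, so $\theta$ decreases on $[h_0,\infty)$ and $1-\theta(h)\ge1-\theta(h_0)>0$, and a first integration gives a crude power bound forcing $\theta(h)\to0$, hence $1-\theta(h)\to1$. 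Bootstrapping: pick $\epsilon>0$ with $(1-\epsilon)(1/\sigma'-1)\le1/\sigma-1$ and $h_1\ge h_0$ with $1-\theta(h)\ge1-\epsilon$ and $\theta(h)\le1/2$ for $h\ge h_1$; then $\theta'(h)/\theta(h)\le(1/\sigma-1)/h$ on $[h_1,\infty)$, and a second integration yields $\theta(h)\le Ch^{1/\sigma-1}$ there for some constant $C$.

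To finish, since $1/\sigma-1<0$ the hypothesis \eqref{eq:cond_A} forces $\sum_{t\ge1}h_t^{1/\sigma-1}<\infty$, hence $h_t\to\infty$, almost surely. On that event pick (pathwise) $T$ with $h_t\ge h_1$ for all $t\ge T$; then $r_t/P_t\le(2/\beta)\theta(h_t)\le(2C/\beta)h_t^{1/\sigma-1}$ for $t\ge T$, while the finitely many earlier terms are finite because $w_t,r_t>0$, so
\begin{equation*}
\sum_{t\ge1}\frac{r_t}{P_t}\le\sum_{t=1}^{T-1}\frac{r_t}{P_t}+\frac{2C}{\beta}\sum_{t\ge1}h_t^{1/\sigma-1}<\infty
\end{equation*}
almost surely, and by the reduction above $B_0>0$, i.e., land is overvalued.

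The step I expect to be the main obstacle is the analytic one in the second paragraph: converting the one-sided lower bound on the elasticity of substitution into the sharp decay rate $\theta(h)=O(h^{1/\sigma-1})$ of the land income share. The delicate point is that the labor share $1-\theta(h)$ is not a priori bounded away from zero, so the share ODE cannot be integrated directly with the target exponent; the two-step bootstrap---first a crude integration showing $1-\theta(h)\to1$, then a re-integration that spends the slack $\sigma'>\sigma$ afforded by Assumption \ref{asmp:ES} to absorb the factor $1-\epsilon$---is what makes the exponent come out to exactly $1/\sigma-1$, matching \eqref{eq:cond_A}. Deriving the share ODE itself is routine but slightly lengthy; I would sanity-check it against the CES benchmark \eqref{eq:CES}, where $\theta(h)\sim\frac{1-\alpha}{\alpha}h^{1/\sigma-1}$.
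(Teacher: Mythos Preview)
Your argument is correct, but it differs from the paper in two interesting ways.

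On the probabilistic side, your route is cleaner than the paper's. You use the exact SDF $m_{t\to t+1}=P_t/(P_{t+1}+r_{t+1})$ to get the closed-form telescoping product $\pi_TP_T=P_0\prod_{t=1}^T(1+r_t/P_t)^{-1}$, then apply dominated convergence and the infinite-product criterion. The paper instead \emph{bounds} the SDF crudely by $m_{t\to t+1}\le w_t/w_{t+1}$, obtains an upper bound $V_t\le w_t\,\E_t\sum_{s\ge1}r_{t+s}/w_{t+s}$, proves $V_t/P_t\to0$ almost surely, and then closes with the martingale identity $P_t-V_t=\E_t[m_{t\to t+s}(P_{t+s}-V_{t+s})]$. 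Your formulation is more direct and shows transparently that the bubble survives exactly when $\sum r_t/P_t<\infty$.

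On the analytic side, however, the paper's approach is simpler than yours, and in particular the obstacle you flag disappears. The paper does \emph{not} work with the land share $\theta(h)$ and its ODE; it integrates the definition \eqref{eq:sigma} directly. Since $\frac{\diff}{\diff\log h}\log\dfrac{F_X}{F_H}(h,1)=\dfrac{1}{\sigma_F(h,1)}\le\dfrac{1}{\sigma}$ for $h\ge A$, one integration from $\log A$ to $\log h$ gives $(F_X/F_H)(h,1)\le C\,h^{1/\sigma}$, hence $r_t/w_t=(F_X/F_H)(h_t,1)\cdot h_t^{-1}\le C\,h_t^{1/\sigma-1}$. In your notation this is bounding $\theta/(1-\theta)$ directly rather than $\theta$: because $\frac{\diff}{\diff\log h}\log\dfrac{\theta}{1-\theta}=1/\sigma_F-1$, the awkward prefactor $1-\theta$ never appears, and no bootstrap or slack $\sigma'>\sigma$ is needed. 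Since $r_t/P_t=\beta^{-1}\theta/(1-\theta)$ is precisely the quantity you want to sum, this shortcut slots straight into your product argument. So your proof works as written, but the ``main obstacle'' you anticipate is an artifact of parameterizing by $\theta$ instead of $\theta/(1-\theta)$.
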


The condition \eqref{eq:cond_A} can be understood as follows. Suppose for simplicity that $(A_{Ht},A_{Xt})=(G_H^t,G_X^t)$, so productivity growth is exponential. Then the $t$-th term in the sum \eqref{eq:cond_A} is $(G_H/G_X)^{(1/\sigma-1)t}$, which is summable if $\sigma>1$ and $G_H>G_X$. Thus condition \eqref{eq:cond_A} roughly says that labor productivity growth is higher than land productivity growth in the long run. The intuition for Theorem \ref{thm:main} is similar to that noted in the introduction, so we do not repeat it. It is important to note that since the equilibrium is unique by Proposition \ref{prop:eq}, under the conditions in Theorem \ref{thm:main}, there are no equilibria in which the land price equals its fundamental value. In other words, in this economy with aggregate uncertainty, the only possible equilibrium is one that features land overvaluation.\footnote{By a symmetric argument, if the elasticity of substitution is bounded \emph{above} by $\sigma<1$ and $G_H<G_X$, a land bubble also inevitably occurs. If $\sigma=1$ or $G_H=G_X$, then bubbles do not occur, but these cases are obviously knife-edge.}

In our model, land is the primary store of value. A natural question is what happens if there are multiple assets that serve as a store of value such as stocks, gold and cryptocurrency. In this case, the extent of overvaluation in individual assets could be indeterminate. However, the aggregate amount of overvaluation and the equilibrium outcome are determinate and identical, as in the present model. Hence, from a macroeconomic perspective, this equilibrium indeterminacy in individual assets does not matter. This is basically the same as the ``bubble substitution'' argument in \citet[\S5]{Tirole1985}; see also \citet[\S6]{HiranoTodaReal} for an example of overvaluation in the aggregate value of stocks and land. Nonetheless, we emphasize the role of land as a means of savings because as noted in the introduction, land is a focal point due to its characteristics.

Theorem \ref{thm:main} has three important implications. First, it clarifies the role of unbalanced growth and elasticity of substitution for generating land overvaluation. Second, we can derive a new insight on the long-run behavior of land prices in a modern economy. The conventional view is that on the long-run trend, the land price should reflect its fundamental value, even if it may deviate from the fundamental value temporarily. In sharp contrast with this widely-held view, Theorem \ref{thm:main} implies that during the process of economic development characterized by unbalanced productivity growth, land overvaluation will necessarily arise. We discuss the third implication of Theorem \ref{thm:main} by specializing it in the next section.

\section{Examples}\label{sec:example}

In this section, we illustrate Theorem \ref{thm:main} by discussing three special cases.

\subsection{Land price from Malthusian to modern economy}\label{subsec:two-sector}

We present a simple example to illustrate the inevitability of land overvaluation in the context of a major historical trend involving structural changes. To demonstrate this, we employ the two-sector model of \citet{HansenPrescott2002}, simplified without capital. This model can be interpreted as capturing how land prices are determined during the transition from a land-intensive Malthusian economy to a knowledge-intensive modern economy. It is also similar to the example in \citet[\S3.2]{HiranoToda2025JPE} but extended to include aggregate uncertainty and an endogenous structural change.

\paragraph{Model}

Instead of an aggregate production function as in \S\ref{sec:general}, we suppose there are two production sectors denoted by $j=1,2$. Sector 1 is a traditional land-intensive industry where both labor and land are used as inputs for production, such as agriculture and construction. Sector 2 is a modern industry where labor (human capital) is the primary input for production, such as technology, finance, and information and communication. The time $t$ production function of sector $j$ is $F_{jt}(H,X)$. For simplicity, we suppose that technologies in sectors 1 and 2 are Cobb-Douglas and linear, respectively:
\begin{subequations}\label{eq:F_OLG}
\begin{align}
    F_{1t}(H,X)&=A_{1t}H^\alpha X^{1-\alpha},\\
    F_{2t}(H,X)&=A_{2t}H,
\end{align}
\end{subequations}
where $A_{jt}>0$ denotes the productivity in sector $j$ at time $t$ and $\alpha\in (0,1)$.

\paragraph{Equilibrium}

As usual, an equilibrium is defined by utility maximization, profit maximization, and market clearing. We omit the formal definition as it is similar to Definition \ref{defn:eq}. Since in equilibrium, we have $X_{1t}=1$ and $X_{2t}=0$, using the functional form of the production functions \eqref{eq:F_OLG}, profit maximization implies the first-order conditions
\begin{subequations}\label{eq:wr}
    \begin{align}
        w_t&=\alpha A_{1t}H_{1t}^{\alpha-1}, \label{eq:w1}\\
        r_t&=(1-\alpha)A_{1t}H_{1t}^\alpha, \label{eq:rent}\\
        w_t&\ge A_{2t},~\text{with equality if $H_{2t}>0$}. \label{eq:w2}
    \end{align}
\end{subequations}

There are two cases to consider. If $H_{2t}>0$, then \eqref{eq:w2} implies $w_t=A_{2t}$, and \eqref{eq:w1} then implies $H_{1t}=(\alpha A_{1t}/A_{2t})^\frac{1}{1-\alpha}$. For $H_{2t}>0$, we need $H_{1t}<1$, or equivalently $A_{2t}>\alpha A_{1t}$. Otherwise, we have $(H_{1t},H_{2t})=(1,0)$ and $w_t=\alpha A_{1t}$ by \eqref{eq:w1}. Combining the two cases and using \eqref{eq:wr}, we always have
\begin{subequations}
\begin{align}
    w_t&=\max\set{\alpha A_{1t},A_{2t}}, \label{eq:wt}\\
    H_{1t}&=\min\set{\alpha A_{1t}/A_{2t},1}^\frac{1}{1-\alpha}, \label{eq:H1t}\\
    r_t&=(1-\alpha)A_{1t}\min\set{\alpha A_{1t}/A_{2t},1}^\frac{\alpha}{1-\alpha}. \label{eq:rt}
\end{align}
\end{subequations}

\paragraph{Unbalanced growth and land overvaluation}

Although this two-sector model appears rather different from the model in \S\ref{sec:general}, the former is actually a special case of the latter.

As is well known, profit maximization at the individual sector or firm level is equivalent to that at the aggregate level. Consider the aggregation of the two production functions in \eqref{eq:F_OLG}. Suppressing the $t$ subscript and setting $(X_1,X_2)=(X,0)$, it suffices to solve
\begin{equation*}
    F(H,X)\coloneqq
    \max\set{\sum_{j=1}^2 F_j(H_j,X_j):\sum_{j=1}^2 H_j=H,\sum_{j=1}^2 X_j=X}.
\end{equation*}
Applying the Karush-Kuhn-Tucker theorem, it is straightforward to show that the aggregate production function is\begin{equation*}
    F_t(H,X)=\begin{cases*}
        A_{1t}H^\alpha X^{1-\alpha} & if $\alpha A_{1t}/A_{2t}\ge (H/X)^{1-\alpha}$,\\
        A_{2t}H+(1-\alpha)\alpha^\frac{\alpha}{1-\alpha}(A_{1t}/A_{2t}^\alpha)^\frac{1}{1-\alpha}X & if $\alpha A_{1t}/A_{2t}<(H/X)^{1-\alpha}$.
    \end{cases*}
\end{equation*}
Therefore if we define
\begin{equation*}
    F(H,X)=\min\set{H^\alpha X^{1-\alpha},H+(1-\alpha)\alpha^\frac{\alpha}{1-\alpha}X}
\end{equation*}
and $(A_{Ht},A_{Xt})=(A_{2t},(A_{1t}/A_{2t}^\alpha)^\frac{1}{1-\alpha})$, we obtain $F_t(H,X)=F(A_{Ht}H,A_{Xt}X)$ and Assumption \ref{asmp:F} holds. Furthermore, because
\begin{equation*}
    F(H,X)=H+(1-\alpha)\alpha^\frac{\alpha}{1-\alpha}X
\end{equation*}
is linear for $H\ge \alpha^\frac{1}{1-\alpha}X$, in which case the elasticity of substitution is $\sigma_F=\infty$, Assumption \ref{asmp:ES} also holds. The relative productivity is $A_{Ht}/A_{Xt}=(A_{2t}/A_{1t})^\frac{1}{1-\alpha}$, so condition \eqref{eq:cond_A} (with $\sigma=\infty$) reduces to
\begin{equation}
    \E_0\sum_{t=1}^\infty (A_{1t}/A_{2t})^\frac{1}{1-\alpha}<\infty. \label{eq:bubble_cond}
\end{equation}

This result implies that a land price bubble emerges if and only if the productivity growth in sector 2 is sufficiently faster than that in sector 1 so that $(A_{1t}/A_{2t})^\frac{1}{1-\alpha}$ is summable. Thus, a land price bubble emerges with economic development. For instance, suppose productivities grow exponentially, so $(A_{1t},A_{2t})=(A_1G_1^t,A_2G_2^t)$. Then the land overvaluation condition \eqref{eq:bubble_cond} is equivalent to $G_2>G_1$, \ie, \emph{unbalanced growth}. When the economy features multiple sectors as in reality, there is no reason to expect equal growth rates across sectors. The slightest introspection suggests that it would be a miracle if the rate of technological progress were the same in 19th century trains and (horse-drawn) carriages, 20th century computers and calculators, or early 21st century electric vehicle batteries and internal combustion engines. Unbalanced growth is a natural and general feature in the process of economic development.

When the productivity in sector 2 is so low that $A_{2t}\le \alpha A_{1t}$, or $t\le t^*\coloneqq \log (\alpha A_1/A_2)/\log (G_2/G_1)$ in the exponential example, by \eqref{eq:H1t} we have $H_{1t}=1$, so all labor is hired in the land-intensive sector. If this situation is expected to continue, by \eqref{eq:rt} and $P_t=\beta w_t$, both land rents and prices are expected to grow at the same rate of $G_1$. In the condition \eqref{eq:bubble_cond}, since $A_{1t}>A_{2t},$  $\E_0\sum_{t=1}^\infty (A_{1t}/A_{2t})^\frac{1}{1-\alpha}=\infty$. Therefore, the land price reflects fundamentals. We may interpret this situation as a Malthusian economy dominated by land-intensive sectors. From this situation, as the productivity of sector 2 improves relative to sector 1 and $A_{2t}>\alpha A_{1t}$, or after the critical time $t^*$, sector 2 becomes active, \ie, an endogenous structural change occurs. We may interpret this situation as the Industrial Revolution or the rise of modern economy where human capital plays an important role in driving economic growth. If this relative productivity growth is expected to continue, the sum in \eqref{eq:bubble_cond} becomes finite, and a land bubble inevitably emerges. 

Moreover, when sector 2 is active together with the structural change, as the relative productivity $A_{2t}/A_{1t}$ increases, by \eqref{eq:H1t} the employment share of sector 1, $H_{1t}=(\alpha A_{1t}/A_{2t})^\frac{1}{1-\alpha}$, declines. Thus, our model is consistent with stylize facts documented in Figures \ref{fig:US}--\ref{fig:OECD} that the employment share of land-intensive sectors has been declining but land remains to be an important store of value.

From this analysis, we can draw an important insight. That is, in a stationary world in which only the land-intensive sector persists or the productivity growth rates in the two sectors are the same, land prices and rents grow at the same rate and therefore land price bubbles can never emerge. Once new sectors with higher productivity growth emerge and the economy shifts to a nonstationary world in which those new sectors drive economic growth, land price bubbles become inevitable.

\subsection{Recurrent stochastic bubbles}\label{subsec:recurrent}

In \S\ref{subsec:two-sector}, we have derived the insight that land overvaluation will occur in the major historical trend of the transition to the modern economy where human capital drives long-run economic growth. In this section, we connect this insight with short-run fluctuations.

The production function takes the CES form \eqref{eq:CES}. Let $A_t\coloneqq A_{Ht}/A_{Xt}$ be the relative productivity of labor. The state of the economy at time $t$ is denoted by $n_t$, which evolves over time according to a Markov chain with transition probability matrix $\Pi=(\pi_{nn'})$, where $\pi_{nn'}=\Pr(n_t=n' \mid n_{t-1}=n)$. The relative productivity $A_t$ evolves over time as a Markov multiplicative process
\begin{equation}
    A_t=G_tA_{t-1}, \label{eq:MMP}
\end{equation}
where $G_t$ conditional on $(n_{t-1},n_t)=(n,n')$ is an \iid copy of some random variable $G_{nn'}>0$.\footnote{See \citet[\S 2]{BeareToda2022ECMA} for more details on such Markov multiplicative processes.} Let $S_n(A)$ be the value of \eqref{eq:cond_A} when $(A_0,n_0)=(A,n)$. Due to the multiplicative nature of shocks and homogeneity, we may write $S_n(A)=s_nA^{\rho-1}$ for some constant $s_n>0$, where $\rho=1/\sigma$. A dynamic programming argument shows
\begin{equation}
    s_n=1+\sum_{n'=1}^N \pi_{nn'}\E[G_{nn'}^{\rho-1}]s_{n'}. \label{eq:sn}
\end{equation}
Defining the $N\times 1$ vector $s=(s_1,\dots,s_N)'$, the vector of ones $1=(1,\dots,1)'$, and the $N\times N$ nonnegative matrix $K=(\pi_{nn'}\E[G_{nn'}^{\rho-1}])$, we may rewrite \eqref{eq:sn} as
\begin{equation}
    s=1+Ks\iff s=(I-K)^{-1}1. \label{eq:s}
\end{equation}
A positive and finite solution to \eqref{eq:s} exists if and only if the spectral radius of $K$ (the maximum modulus of all eigenvalues) is less than 1.\footnote{This argument is similar to \citet{Toda2019JME,BorovickaStachurski2020}.} Therefore we obtain the following proposition.

\begin{prop}\label{prop:Markov}
Suppose the production function is CES with elasticity of substitution $\sigma>1$ and the relative labor productivity $A_t\coloneqq A_{Ht}/A_{Xt}$ follows the Markov multiplicative process \eqref{eq:MMP}. Let $K=(\pi_{nn'}\E[G_{nn'}^{1/\sigma-1}])$. Then land is overvalued if the spectral radius of $K$ is less than 1.
\end{prop}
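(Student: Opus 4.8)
The plan is to obtain Proposition \ref{prop:Markov} as a direct application of Theorem \ref{thm:main}, so the whole task reduces to checking that theorem's three hypotheses for the CES specification \eqref{eq:CES} driven by the Markov multiplicative process \eqref{eq:MMP}. Assumption \ref{asmp:F} is immediate: with $\rho=1/\sigma\in(0,1)$ the CES function is neoclassical, and $A_{Ht},A_{Xt}>0$ are $\mathcal{F}_t$-measurable by construction. For Assumption \ref{asmp:ES} there is a small wrinkle: the CES elasticity is constant and equal to $\sigma$ (Lemma \ref{lem:sigma}), so $\liminf_{H\to\infty}\sigma_F(H,1)=\sigma$ does not \emph{strictly} exceed $\sigma$; I would therefore apply Theorem \ref{thm:main} not with $\sigma$ itself but with an auxiliary parameter $\sigma'\in(1,\sigma)$, for which $\liminf_{H\to\infty}\sigma_F(H,1)=\sigma>\sigma'>1$ holds strictly. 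What remains is to verify the summability condition \eqref{eq:cond_A} with exponent $1/\sigma'-1$.

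For that I would reuse the computation the paper already carried out in \eqref{eq:MMP}--\eqref{eq:s}, now run with the exponent $1/\sigma'-1$ in place of $\rho-1$. Conditional on the chain path the increments $G_t$ are independent with $G_t$ distributed as $G_{n_{t-1}n_t}$, so $\E[(G_1\cdots G_t)^{1/\sigma'-1}\mid n_0=n]=(K_{\sigma'}^t 1)_n$ with $K_{\sigma'}=(\pi_{nn'}\E[G_{nn'}^{1/\sigma'-1}])$ nonnegative, and by homogeneity $\E_0\sum_{t=1}^\infty A_t^{1/\sigma'-1}$ equals $A_0^{1/\sigma'-1}$ times the $n_0$-th component of $\sum_{t\ge1}K_{\sigma'}^t 1$; equivalently it is $A_0^{1/\sigma'-1}$ times a solution of the fixed point analogous to \eqref{eq:sn}--\eqref{eq:s}. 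By the Neumann series this series converges to a finite (indeed positive) vector exactly when the spectral radius $\rho(K_{\sigma'})<1$. Since $n_0$ takes finitely many values and $A_0>0$ almost surely, $\rho(K_{\sigma'})<1$ gives $\E_0\sum_{t=1}^\infty A_t^{1/\sigma'-1}<\infty$ almost surely, which is precisely \eqref{eq:cond_A} for the parameter $\sigma'$.

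The last step is to pass from the hypothesis $\rho(K)<1$ — where $K=(\pi_{nn'}\E[G_{nn'}^{1/\sigma-1}])$ has finite entries, which is implicit in the hypothesis — to $\rho(K_{\sigma'})<1$ for some $\sigma'\in(1,\sigma)$ close enough to $\sigma$. The elementary estimate $x^{1/\sigma'-1}\le 1+x^{1/\sigma-1}$, valid for every $x>0$ and every $\sigma'\in(1,\sigma)$ (check $x\ge1$ and $0<x<1$ separately), supplies the integrable envelope $1+G_{nn'}^{1/\sigma-1}$, so dominated convergence yields $\E[G_{nn'}^{1/\sigma'-1}]\to\E[G_{nn'}^{1/\sigma-1}]$ as $\sigma'\uparrow\sigma$, hence $K_{\sigma'}\to K$ entrywise; since the spectral radius depends continuously on the entries, $\rho(K_{\sigma'})<1$ for all $\sigma'$ sufficiently close to $\sigma$. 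Fixing one such $\sigma'$, all hypotheses of Theorem \ref{thm:main} are met and land is overvalued in equilibrium. I expect this perturbation — trading the threshold $\sigma$ for a slightly smaller $\sigma'$ while keeping the negative moments $\E[G_{nn'}^{1/\sigma'-1}]$ finite and the spectral radius below $1$ — to be the only non-routine point; the reduction to \eqref{eq:cond_A} and the Neumann-series criterion are standard.
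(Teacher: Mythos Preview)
Your proposal is correct and follows the same route as the paper: verify condition \eqref{eq:cond_A} via the Neumann-series/spectral-radius criterion \eqref{eq:sn}--\eqref{eq:s}, then invoke Theorem \ref{thm:main}. The paper does exactly this inline in \S\ref{subsec:recurrent} and simply states ``therefore we obtain the following proposition.''

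The one place you go beyond the paper is the perturbation from $\sigma$ to $\sigma'\in(1,\sigma)$ to accommodate the \emph{strict} inequality in Assumption \ref{asmp:ES}. You are right that the CES elasticity is identically $\sigma$, so $\liminf_{H\to\infty}\sigma_F(H,1)=\sigma\not>\sigma$, and the paper never comments on this. Your dominated-convergence argument (the envelope $x^{1/\sigma'-1}\le 1+x^{1/\sigma-1}$ is correct) together with continuity of the spectral radius cleanly resolves it. It is worth noting, though, that if you trace through the paper's proof of Theorem \ref{thm:main}, Lemmas \ref{lem:MRT} and \ref{lem:VP} only use $\sigma_F(H,1)\ge\sigma$ for large $H$, not the strict form of Assumption \ref{asmp:ES}; so the perturbation, while rigorous, is patching a gap between the theorem's \emph{statement} and its \emph{proof} rather than a genuine obstruction. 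Either way your argument stands.
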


As a numerical example, we set $\beta=0.5$, $\alpha=0.8$, $\sigma=1.25$, $N=2$, $\pi_{nn'}=1/3$ if $n\neq n'$, and $(G_{1n'},G_{2n'})=(1.1,0.95)$ for all $n'$, which implies that the spectral radius of $K$ is less than 1 and land is overvalued. Figure \ref{fig:sim} shows one simulation for 200 periods. The land price exhibits boom-bust cycles. The price-rent ratio steadily increases, consistent with Theorem \ref{thm:main}.

\begin{figure}[htb!]
\centering
\includegraphics[width=0.7\linewidth]{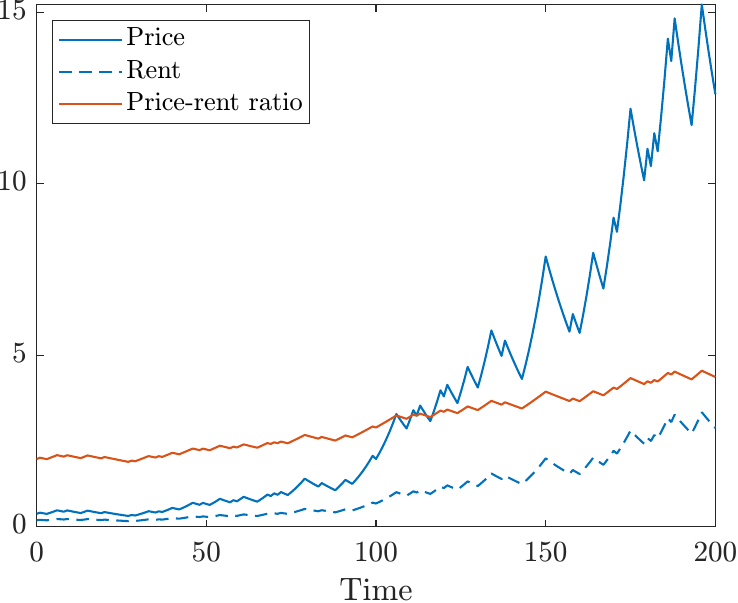}
\caption{Simulation of the numerical example of Proposition \ref{prop:Markov}.}\label{fig:sim}
\end{figure}

Proposition \ref{prop:Markov} and this numerical example provide the third implication of Theorem \ref{thm:main}. When productivities increase and remain to be high, land prices will continue to rise relative to the trend, which may look like an emergence of a large land price bubble. Conversely, if productivities decrease and remain to be so for an extended period of time, land prices will fall, which may appear to be a bursting of a land bubble. Thus, land prices exhibit recurrent booms and busts driven by fluctuations in productivities. Nonetheless, as long as the relative productivity growth of land is low, land will always be overvalued in the long-term trend, with the size of land overvaluation fluctuating over short periods of time and a steady upward trend in the price-rent ratio. Our model provides a theoretical foundation for recurrent stochastic bubbles. As noted in the introduction, this insight can only be obtained in an economy with aggregate risk.

\subsection{Land price with urban formation}\label{subsec:urban}

The example in \S\ref{subsec:two-sector} illustrates how land overvaluation emerges along the transition from a Malthusian (agricultural) economy to a modern economy. However, in modern economies, it seems that urban land, not agricultural land, has high value. In this section, we present a model with urban land overvaluation.

As before, we denote land by $X$ and non-land production factor by $H$. A fraction of non-land factor (\eg, construction workers) and land produce real estate $E=E(A_1\theta H,A_2X)$, where $E$ is a neoclassical production function, $A_1,A_2$ are factor-augmenting productivities, and $\theta\in (0,1)$ is the fraction of $H$ used in the real estate sector. The remaining $H$ (\eg, office workers) and real estate produce the consumption good using the Cobb-Douglas technology $Y=A_3((1-\theta)H)^\alpha E^{1-\alpha}$, where $A_3$ is total factor productivity and $\alpha\in (0,1)$. The final output given $(H,X)$ is then
\begin{equation}
    Y=A_3(1-\theta)^\alpha H^\alpha E(A_1\theta H,A_2X)^{1-\alpha}. \label{eq:Y}
\end{equation}

To reduce this model to a special case of the setting in \S\ref{sec:general}, we define the neoclassical production function
\begin{equation}
    F(H,X)\coloneqq H^\alpha E(H,X)^{1-\alpha}. \label{eq:FE}
\end{equation}
Then for arbitrary $A_H,A_X,\lambda>0$, we have
\begin{align}
    F(A_HH,A_XX)&=A_H^\alpha H^\alpha E(A_HH,A_XX)^{1-\alpha}\notag \\
    &=\frac{A_H^\alpha}{\lambda^{1-\alpha}}H^\alpha E(\lambda A_HH,\lambda A_XX)^{1-\alpha}. \label{eq:Flambda}
\end{align}
Comparing \eqref{eq:Y} and \eqref{eq:Flambda}, Assumption \ref{asmp:F} holds if we set
\begin{equation*}
    (A_H^\alpha \lambda^{\alpha-1},\lambda A_H,\lambda A_X)=(A_3(1-\theta)^\alpha,A_1\theta,A_2),
\end{equation*}
or equivalently $\lambda=(A_1\theta/(A_3(1-\theta)))^\alpha$ and $(A_H,A_X)=(A_1\theta/\lambda,A_2/\lambda)$. Note that the relative productivity
\begin{equation*}
    A_H/A_X=\theta A_1/A_2
\end{equation*}
depends only on $A_1/A_2$ and not on $A_3$. Therefore to apply Theorem \ref{thm:main}, it remains to characterize the elasticity of substitution of $F$.

\begin{prop}\label{prop:ES}
Let $E,F$ be the neoclassical production functions above with elasticity of substitution $\sigma_E,\sigma_F$. Then
\begin{equation}
    \sigma_F-1=\frac{1}{1+\alpha\frac{XE_X}{HE_H}\sigma_E}(\sigma_E-1). \label{eq:sigmaF}
\end{equation}
In particular, $\sigma_E>1$ implies $\sigma_F>1$.
\end{prop}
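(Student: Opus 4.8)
The plan is to apply Lemma~\ref{lem:sigma}, which gives $\sigma_F=F_HF_X/(FF_{HX})$, after expressing the partial derivatives of $F(H,X)=H^\alpha E(H,X)^{1-\alpha}$ in terms of those of $E$ and eliminating $E_{HX}$ through the identity $EE_{HX}=E_HE_X/\sigma_E$.

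First I would differentiate. The product and chain rules give
\[
F_H=H^{\alpha-1}E^{-\alpha}\bigl(\alpha E+(1-\alpha)HE_H\bigr),\quad F_X=(1-\alpha)H^\alpha E^{-\alpha}E_X,
\]
\[
F_{HX}=(1-\alpha)H^{\alpha-1}E^{-\alpha-1}\bigl(\alpha EE_X+HEE_{HX}-\alpha HE_HE_X\bigr).
\]
Forming $F_HF_X$ and $FF_{HX}$, the common factor $(1-\alpha)H^{2\alpha-1}E^{-2\alpha}$ drops out of the ratio, leaving
\[
\sigma_F=\frac{E_X\bigl(\alpha E+(1-\alpha)HE_H\bigr)}{\alpha EE_X+HEE_{HX}-\alpha HE_HE_X}.
\]
I would then substitute $EE_{HX}=E_HE_X/\sigma_E$ and compute $\sigma_F-1$: the $\alpha EE_X$ terms cancel in the numerator and the $HE_HE_X$ terms combine to $HE_HE_X(\sigma_E-1)/\sigma_E$, so that, after clearing $\sigma_E$,
\[
\sigma_F-1=\frac{HE_HE_X(\sigma_E-1)}{\alpha\sigma_E EE_X+HE_HE_X(1-\alpha\sigma_E)}.
\]

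The decisive step is to use that $E$ is homogeneous of degree one, so Euler's identity gives $E=HE_H+XE_X$ and hence $EE_X=HE_HE_X+XE_X^2$. Substituting this, the denominator telescopes to $HE_HE_X+\alpha\sigma_E XE_X^2=HE_HE_X\bigl(1+\alpha\sigma_E XE_X/(HE_H)\bigr)$, which is exactly \eqref{eq:sigmaF}. The final assertion is immediate: since $E$ is neoclassical we have $E_H,E_X>0$, and with $H,X>0$, $\alpha\in(0,1)$, $\sigma_E>0$ the prefactor $1/\bigl(1+\alpha\sigma_E XE_X/(HE_H)\bigr)$ lies in $(0,1)$; hence $\sigma_F-1$ has the same sign as $\sigma_E-1$, and in particular $\sigma_E>1$ implies $\sigma_F>1$.

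I expect the main obstacle to be purely organizational—the expression for $F_{HX}$ and the product $FF_{HX}$ are bulky, and it is tempting to simplify too early. The one step that is not mechanical, and without which the algebra will not collapse to the stated form, is recognizing that Euler's identity for $E$ is what converts $EE_X$ into $HE_HE_X+XE_X^2$ and lets the denominator factor through $HE_HE_X$.
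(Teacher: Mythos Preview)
Your proof is correct and follows essentially the same route as the paper: compute $F_H,F_X,F_{HX}$ from $F=H^\alpha E^{1-\alpha}$, form $\sigma_F=F_HF_X/(FF_{HX})$ via Lemma~\ref{lem:sigma}, and then simplify $\sigma_F-1$ using both Euler's identity $E=HE_H+XE_X$ and the relation $EE_{HX}=E_HE_X/\sigma_E$. The only cosmetic difference is the order of those two substitutions---you eliminate $E_{HX}$ first and then invoke Euler, while the paper applies Euler first to reach $\sigma_F-1=\dfrac{E_{HX}}{E_{HX}+\alpha H^{-1}XE^{-1}E_X^2}(\sigma_E-1)$ and only then converts $E_{HX}$ to $\sigma_E$.
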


Empirical evidence suggests that the elasticity of substitution between land and non-land factors in producing real estate is above 1 \citep{EppleGordonSieg2010,AhlfeldtMcMillen2014}. Thus, if we assume $E$ is CES with elasticity of substitution $\sigma_E>1$, a straightforward calculation shows $(XE_X)/(HE_H)\propto (X/H)^{1-1/\sigma_E}\to 0$ as $H\to\infty$. Then \eqref{eq:sigmaF} implies
\begin{equation*}
    \lim_{H\to\infty}(\sigma_F-1)=\sigma_E-1>0,
\end{equation*}
so Assumption \ref{asmp:ES} holds and the conclusion of Theorem \ref{thm:main} remains valid if we replace the factor-augmenting productivities $(A_H,A_X)$ in the aggregate production function by the factor-augmenting productivities $(A_1,A_2)$ in the real estate sector. Thus, land overvaluation emerges in an urban economy if there is technological progress in the construction of real estate.

\appendix

\section{Proofs}\label{sec:proof}

\subsection{Proof of Proposition \ref{prop:eq}}
The first-order condition for profit maximization implies \eqref{eq:w} and \eqref{eq:r}. Define the return on land by
\begin{equation*}
    R_{t+1}=\frac{P_{t+1}+r_{t+1}}{P_t}.
\end{equation*}
Then the budget constraints \eqref{eq:budget_OLG} can be combined into one as
\begin{equation*}
    c_{t+1}^o=R_{t+1}(w_t-c_t^y).
\end{equation*}
Suppressing the time subscripts and substituting into the objective function, the young seek to maximize
\begin{equation*}
    (1-\beta)\log c^y+\E[\log c^o]=(1-\beta)\log c^y+\beta \log (w-c^y)+\beta\E[\log R].
\end{equation*}
Clearly this function is strictly concave in $c^y$ and achieves a unique maximum characterized by the first-order condition
\begin{equation*}
    \frac{1-\beta}{c^y}-\frac{\beta}{w-c^y}=0\iff c^y=(1-\beta)w,
\end{equation*}
which is \eqref{eq:y}. Since in equilibrium we have $x_t=1$, the land price satisfies $P_t=P_tx_t=w_t-y_t=\beta w_t$, which is \eqref{eq:P}. \hfill \qedsymbol

\subsection{Elasticity of substitution}\label{subsec:proof_ES}

A mathematically more convenient way to define the elasticity of substitution than \eqref{eq:ES} is the following. Let $h=\log(H/X)$ be the log relative inputs. Then noting that $w=F_H$ and $r=F_X$, \eqref{eq:ES} can be rewritten as
\begin{equation}
    \rho(H,X)\coloneqq \frac{1}{\sigma(H,X)}=-\frac{\partial \log (F_H/F_X)}{\partial h}, \label{eq:sigma}
\end{equation}
where we set $(H,X)=(X\e^h,X)$ to compute the derivative and substitute $h=\log(H/X)$.

\begin{lem}\label{lem:sigma}
Let $F$ be a neoclassical production function. Then its elasticity of substitution $\sigma_F(H,X)$ satisfies
\begin{equation}
	\sigma_F=\frac{F_HF_X}{FF_{HX}}. \label{eq:sigma_homog}
\end{equation}
\end{lem}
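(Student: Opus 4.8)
The plan is to start from the definition \eqref{eq:sigma}, namely $\rho_F = 1/\sigma_F = -\partial \log(F_H/F_X)/\partial h$ where $h = \log(H/X)$ and we evaluate along the curve $(H,X) = (X\e^h, X)$ with $X$ held fixed. The key structural input is that $F$ is homogeneous of degree $1$, so I will exploit the consequences of Euler's theorem: $F = HF_H + XF_X$, and differentiating this relation (or equivalently using that $F_H, F_X$ are homogeneous of degree $0$) gives $HF_{HH} + XF_{HX} = 0$ and $HF_{HX} + XF_{XX} = 0$. These let me express all second derivatives in terms of $F_{HX}$ alone: $F_{HH} = -(X/H)F_{HX}$ and $F_{XX} = -(H/X)F_{HX}$.

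Next I would compute the derivative directly. Along the curve, $H = X\e^h$ with $X$ fixed, so $\partial H/\partial h = H$ and $\partial X/\partial h = 0$. Then
\begin{align*}
    \frac{\partial \log(F_H/F_X)}{\partial h} &= \frac{1}{F_H}\frac{\partial F_H}{\partial h} - \frac{1}{F_X}\frac{\partial F_X}{\partial h} = \frac{HF_{HH}}{F_H} - \frac{HF_{HX}}{F_X}.
\end{align*}
Substituting $F_{HH} = -(X/H)F_{HX}$ yields $\partial \log(F_H/F_X)/\partial h = -XF_{HX}/F_H - HF_{HX}/F_X = -F_{HX}(XF_X + HF_H)/(F_HF_X)$. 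By Euler's theorem $XF_X + HF_H = F$, so this equals $-FF_{HX}/(F_HF_X)$. Therefore $\rho_F = FF_{HX}/(F_HF_X)$, and taking reciprocals gives $\sigma_F = F_HF_X/(FF_{HX})$, which is \eqref{eq:sigma_homog}.

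The only genuine subtlety — and the step I would be most careful about — is the bookkeeping for which variable is held fixed when differentiating: the definition fixes $X$ and varies $H = X\e^h$, so a raw chain-rule expansion of $\log(F_H/F_X)$ in $(H,X)$ picks up only the $H$-derivatives. One should also note that the answer is manifestly symmetric in $H$ and $X$ (as it must be, since the elasticity of substitution does not depend on which input is placed on which axis), which provides a consistency check: had I instead fixed $H$ and varied $X$, the homogeneity relation $F_{XX} = -(H/X)F_{HX}$ would deliver the same formula. A minor technical point is that $F_{HX}$ could in principle vanish, in which case $\sigma_F = +\infty$; the formula \eqref{eq:sigma_homog} should be read with this convention, consistent with its use in the linear case discussed in \S\ref{subsec:two-sector}. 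Beyond these points the computation is routine, so I do not expect any real obstacle.
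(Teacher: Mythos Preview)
Your proof is correct and follows essentially the same route as the paper: both start from the definition $1/\sigma_F=-\partial\log(F_H/F_X)/\partial h$ along $(H,X)=(X\e^h,X)$, use the Euler identities $HF_H+XF_X=F$ and $HF_{HH}+XF_{HX}=0$ to eliminate $F_{HH}$, and collapse the result to $FF_{HX}/(F_HF_X)$. The only cosmetic difference is that the paper writes $\partial\log(F_X/F_H)/\partial h$ instead of $-\partial\log(F_H/F_X)/\partial h$; your added remarks on the symmetry check and the $F_{HX}=0$ convention are sound bonuses.
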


\begin{proof}
Since $F$ is homogeneous of degree 1, $F_H$ is homogeneous of degree 0. Therefore differentiating both sides of
\begin{align*}
	F(\lambda H,\lambda X)&=\lambda F(H,X),\\
	F_H(\lambda H,\lambda X)&=F_H(H,X)
\end{align*}
with respect to $\lambda$ and setting $\lambda=1$, we obtain
\begin{subequations}\label{eq:euler}
\begin{align}
	HF_H+XF_X&=F, \label{eq:euler_F}\\
	HF_{HH}+XF_{HX}&=0. \label{eq:euler_FH}
\end{align}
\end{subequations}

Let $h=\log(H/X)$. Using the definition \eqref{eq:sigma} and \eqref{eq:euler}, we obtain
\begin{align*}
	\frac{1}{\sigma_F}&=\frac{\partial}{\partial h}\log \frac{F_X(X\e^h,X)}{F_H(X\e^h,X)}=\frac{X\e^h F_{HX}}{F_X}-\frac{X\e^h F_{HH}}{F_H}\\
	&=\frac{HF_{HX}}{F_X}-\frac{HF_{HH}}{F_H}=\frac{HF_{HX}}{F_X}+\frac{XF_{HX}}{F_H}\\
	&=\frac{F_{HX}}{F_HF_X}(HF_H+XF_X)=\frac{FF_{HX}}{F_HF_X}. \qedhere
\end{align*}
\end{proof}

\subsection{Proof of Theorem \ref{thm:main}}

We prove Theorem \ref{thm:main} by establishing a series of lemmas.

\begin{lem}\label{lem:MRT}
Let $A>0$ and suppose that $\sigma_F(H,1)\ge \sigma$ for $H\ge A$. Let $\rho=1/\sigma$. If $A_H/A_X\ge A$, then
\begin{equation}
    \frac{F_X}{F_H}(A_H,A_X)\le \frac{F_X}{F_H}(A,1)A^{-\rho}(A_H/A_X)^\rho. \label{eq:MRT_ub}
\end{equation}
\end{lem}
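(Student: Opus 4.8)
The plan is to reduce \eqref{eq:MRT_ub} to a statement about a single-variable function by exploiting homogeneity of degree~$1$ of $F$. Since $F_H$ and $F_X$ are homogeneous of degree~$0$, the ratio $F_X/F_H$ depends only on the relative input $H/X$; so defining $h=\log(H/X)$ and the function $g(h)\coloneqq \log\bigl(F_X(\e^h,1)/F_H(\e^h,1)\bigr)$, inequality \eqref{eq:MRT_ub} is equivalent to the bound
\begin{equation*}
    g(h)\le g(\log A)+\rho\,(h-\log A)\quad\text{for all }h\ge \log A.
\end{equation*}
In other words, I want to show that on the region $h\ge \log A$, the function $g$ lies below the affine function through $(\log A, g(\log A))$ with slope $\rho$.

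The key step is the identification of $g'$ with $\rho_F=1/\sigma_F$. By the definition \eqref{eq:sigma} of the elasticity of substitution, we have exactly $g'(h) = \partial \log(F_H/F_X)/\partial h \cdot (-1)$... more carefully, $\rho(H,X)=-\partial\log(F_H/F_X)/\partial h = \partial\log(F_X/F_H)/\partial h = g'(h)$, where the partial is computed along $(H,X)=(X\e^h,X)$. Thus $g'(h)=\rho_F(\e^h,1)=1/\sigma_F(\e^h,1)$. The hypothesis $\sigma_F(H,1)\ge\sigma$ for $H\ge A$ therefore translates to $g'(h)\le \rho$ for $h\ge\log A$ (recall $\rho=1/\sigma$). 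Integrating this derivative bound from $\log A$ to $h$ gives $g(h)-g(\log A)=\int_{\log A}^h g'(s)\,ds\le \rho(h-\log A)$, which is precisely the desired affine bound. Exponentiating and substituting back $h=\log(A_H/A_X)\ge\log A$ (which is the hypothesis $A_H/A_X\ge A$) yields
\begin{equation*}
    \frac{F_X}{F_H}(A_H,A_X)=\e^{g(h)}\le \e^{g(\log A)}\e^{\rho(h-\log A)}=\frac{F_X}{F_H}(A,1)\,A^{-\rho}\,(A_H/A_X)^\rho,
\end{equation*}
which is \eqref{eq:MRT_ub}. Here I used $F_X/F_H(A_H,A_X)=F_X/F_H(\e^h,1)$ by homogeneity of degree~$0$, and $F_X/F_H(A,1)=\e^{g(\log A)}$.

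The main obstacle is really just bookkeeping with the homogeneity reduction and making sure the sign conventions in \eqref{eq:sigma} are applied correctly, so that the inequality $\sigma_F\ge\sigma$ becomes an \emph{upper} bound $g'\le\rho$ on the log-ratio derivative rather than a lower bound. One should also note that $g$ is continuously differentiable on the relevant range since $F$ is $C^1$ with $F_H,F_X>0$ (and one implicitly needs $F_{HX}$ to exist for \eqref{eq:sigma} to make literal sense; alternatively, phrase the argument so that $g$ need only be $C^1$ and absolutely continuous, which follows from $F$ being neoclassical together with the standing differentiability assumptions, and invoke the fundamental theorem of calculus in that form). No deeper difficulty is expected; the lemma is essentially the observation that a derivative bound integrates to a secant/affine bound.
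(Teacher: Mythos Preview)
Your proof is correct and follows essentially the same approach as the paper: reduce to a single variable via homogeneity, identify the derivative of $\log(F_X/F_H)$ with $1/\sigma_F$, and integrate the bound $g'\le\rho$ from $\log A$ to $\log(A_H/A_X)$. The only cosmetic difference is that the paper phrases the integration step via the mean value theorem for integrals (producing an intermediate point $h_1$ with $\rho(\e^{h_1},1)\log(B/A)$ equal to the increment of $g$) before applying $\rho(\e^{h_1},1)\le\rho$, whereas you bound the integrand directly.
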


\begin{proof}
By Assumption \ref{asmp:F}, $F$ is homogeneous of degree 1. Therefore $F_H,F_X$ are homogeneous of degree 0, and so is $\rho(H,X)$ in \eqref{eq:sigma}.

Let $B\coloneqq A_H/A_X\ge A$. Setting $H=\e^h$ and $X=1$ in \eqref{eq:sigma}, we obtain
\begin{equation*}
    \rho(\e^h,1)=\frac{\diff}{\diff h}\log \frac{F_X}{F_H}(\e^h,1).
\end{equation*}
Integrating both sides from $h=\log A$ to $h=\log B$ and applying the intermediate value theorem for integrals, there exists $h_1\in (\log A,\log B)$ such that
\begin{align}
    \rho(\e^{h_1},1)\log (B/A)&=\int_{\log A}^{\log B} \rho(\e^h,1)\diff h \notag \\
    &=\log \frac{F_X}{F_H}(B,1)-\log \frac{F_X}{F_H}(A,1). \label{eq:rho_h1}
\end{align}
Taking the exponential of both sides of \eqref{eq:rho_h1}, letting $M\coloneqq (F_X/F_H)(A,1)$, and using the homogeneity of $F_H,F_X$, we obtain
\begin{equation*}
    \frac{F_X}{F_H}(A_H,A_X)=\frac{F_X}{F_H}(B,1)=M (B/A)^{\rho(\e^{h_1},1)}.
\end{equation*}
Since $B\ge A$ and $\rho(\e^{h_1},1)\le \rho\coloneqq 1/\sigma$, it follows that
\begin{equation*}
    \frac{F_X}{F_H}(A_H,A_X)\le M (B/A)^\rho=MA^{-\rho}(A_H/A_X)^\rho,
\end{equation*}
which is \eqref{eq:MRT_ub}.
\end{proof}

\begin{lem}\label{lem:V_ub}
In equilibrium, the fundamental value of land is bounded above as
\begin{equation}
    V_t\le w_t\E_t\left[\sum_{s=1}^\infty \frac{r_{t+s}}{w_{t+s}}\right]. \label{eq:V_ub}
\end{equation}
\end{lem}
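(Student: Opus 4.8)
The plan is to recall the general definition of the fundamental value from \eqref{eq:Vt} and specialize it to the equilibrium objects characterized in Proposition \ref{prop:eq}. In the OLG model, the relevant stochastic discount factor $m_{t\to t+1}$ is the one of the agent who holds land from time $t$ to $t+1$, namely generation $t$. From the log utility \eqref{eq:CD} and the first-order condition, the SDF is $m_{t\to t+1}=\beta c_t^y/c_{t+1}^o$, so that the state price deflator satisfies $\pi_s = \pi_{s-1}\beta c_{s-1}^y/c_s^o$. Using the equilibrium consumption values \eqref{eq:y}--\eqref{eq:z}, namely $c_t^y=(1-\beta)w_t$ and $c_t^o=\beta w_t + r_t$, I would obtain an explicit telescoping expression for $\pi_s$ in terms of wages and rents.

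Next I would substitute into \eqref{eq:Vt} with $D_s = r_s$ (the land rent is the dividend). The key simplification is that the no-arbitrage pricing equation \eqref{eq:noarbitrage} combined with $P_t=\beta w_t$ from \eqref{eq:P} gives $\beta w_t = \E_t[m_{t\to t+1}(\beta w_{t+1}+r_{t+1})]$, and since $c_{t+1}^o = \beta w_{t+1}+r_{t+1}$ while $m_{t\to t+1}=\beta c_t^y/c_{t+1}^o = \beta(1-\beta)w_t/c_{t+1}^o$, one checks that $\pi_{s} (P_s+r_s) = \pi_s c_s^o$ is a convenient object. More directly: from $\pi_s P_s = \E_s[\pi_{s+1}(P_{s+1}+r_{s+1})]$ and the explicit form of $\pi$, the telescoping collapses so that $\pi_s r_s$ is comparable to $\pi_t w_t \cdot (r_s/w_s) \cdot (\text{bounded factor})$. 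Concretely, I expect $\pi_s/\pi_t \le w_t/w_s$ type bound to drop out, because each one-step factor $\pi_{s}/\pi_{s-1} = \beta c_{s-1}^y/c_s^o = \beta(1-\beta)w_{s-1}/(\beta w_s + r_s) \le \beta(1-\beta)w_{s-1}/(\beta w_s) = (1-\beta)w_{s-1}/w_s$, and multiplying these telescopes to $\pi_s/\pi_t \le (1-\beta)^{s-t} w_t/w_s \le w_t/w_s$.

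With that one-step bound in hand, the rest is routine: $V_t = \frac{1}{\pi_t}\E_t\sum_{s=t+1}^\infty \pi_s r_s \le \E_t\sum_{s=t+1}^\infty \frac{w_t}{w_s} r_s = w_t \E_t\sum_{s=1}^\infty \frac{r_{t+s}}{w_{t+s}}$, which is exactly \eqref{eq:V_ub}. I would also note that nonnegativity of all terms justifies interchanging expectation and summation (monotone convergence), and that $w_s>0$, $r_s>0$ in equilibrium guarantee the ratios are well-defined. The main obstacle I anticipate is purely bookkeeping: correctly identifying which agent's SDF prices land across the period $t\to t+1$ in the OLG structure and then carefully telescoping the state price deflator without sign or index errors; once the per-period inequality $\pi_s/\pi_{s-1}\le w_{s-1}/w_s$ is established, the conclusion is immediate.
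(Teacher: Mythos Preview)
Your approach is essentially identical to the paper's: bound the one-period stochastic discount factor above by $w_t/w_{t+1}$, telescope, and sum. One bookkeeping slip to fix: with the utility \eqref{eq:CD} the marginal rate of substitution is
\[
m_{t\to t+1}=\frac{\beta/c_{t+1}^o}{(1-\beta)/c_t^y}=\frac{\beta c_t^y}{(1-\beta)c_{t+1}^o}=\frac{\beta w_t}{\beta w_{t+1}+r_{t+1}},
\]
not $\beta c_t^y/c_{t+1}^o$ as you wrote; with the correct expression the inequality $m_{t\to t+1}\le w_t/w_{t+1}$ follows immediately and there is no spurious $(1-\beta)^{s-t}$ factor to discard.
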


\begin{proof}
The stochastic discount factor between time $t$ and $t+1$ equals the marginal rate of substitution
\begin{align*}
    m_{t\to t+1}&\coloneqq \frac{\beta/c_{t+1}^o}{(1-\beta)/c_t^y}=\frac{\beta c_t^y}{(1-\beta)c_{t+1}^o}\\
    &=\frac{\beta w_t}{\beta w_{t+1}+r_{t+1}}\le \frac{w_t}{w_{t+1}},
\end{align*}
where the last line uses \eqref{eq:eq_obj} and $r_{t+1}\ge 0$. Then we can bound the stochastic discount factor between time $t$ and $t+s$ from above as
\begin{equation*}
    m_{t\to t+s}\coloneqq \prod_{j=0}^{s-1}m_{t+j\to t+j+1}\le \frac{w_t}{w_{t+s}}.
\end{equation*}
Therefore we can bound the fundamental value of land from above as
\begin{align*}
    V_t&\coloneqq \E_t\left[\sum_{s=1}^\infty m_{t\to t+s}r_{t+s}\right]\\
    &\le \E_t\left[\sum_{s=1}^\infty \frac{w_t}{w_{t+s}}r_{t+s}\right]=w_t\E_t\left[\sum_{s=1}^\infty \frac{r_{t+s}}{w_{t+s}}\right]. \qedhere
\end{align*}
\end{proof}

\begin{lem}\label{lem:VP}
We have $\lim_{t\to\infty}V_t/P_t=0$ almost surely.
\end{lem}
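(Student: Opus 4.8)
The plan is to express $V_t/P_t$ through the state-price deflator $\pi_t$ of Section~3 and to show that the numerator $\pi_tV_t$ tends to $0$ almost surely while the denominator $\pi_tP_t$ tends almost surely to a strictly positive, finite limit; the lemma then follows by taking the ratio. The one analytic ingredient feeding both halves is that the rent--wage ratio is almost surely summable, $\sum_{t=1}^\infty r_t/w_t<\infty$, which I would prove first.

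For this summability, use Assumption~\ref{asmp:ES} to fix $A\ge 1$ with $\sigma_F(H,1)\ge\sigma$ for all $H\ge A$, so that Lemma~\ref{lem:MRT} applies. By \eqref{eq:w}--\eqref{eq:r}, $r_t/w_t=(F_X/F_H)(A_{Ht},A_{Xt})(A_{Ht}/A_{Xt})^{-1}$, so on the event $\{A_{Ht}/A_{Xt}\ge A\}$ Lemma~\ref{lem:MRT} gives $r_t/w_t\le M(A_{Ht}/A_{Xt})^{1/\sigma-1}$ with $M:=(F_X/F_H)(A,1)A^{-1/\sigma}<\infty$. Condition \eqref{eq:cond_A} makes the $\mathcal F_0$-measurable sum $\E_0\sum_{t\ge1}(A_{Ht}/A_{Xt})^{1/\sigma-1}$ almost surely finite; since a nonnegative random variable with almost surely finite conditional expectation is almost surely finite, $\sum_{t\ge1}(A_{Ht}/A_{Xt})^{1/\sigma-1}<\infty$ a.s., hence its terms vanish and, because $1/\sigma-1<0$, $A_{Ht}/A_{Xt}\to\infty$ a.s. Thus almost surely only finitely many $t$ violate $A_{Ht}/A_{Xt}\ge A$, and those finitely many $r_t/w_t$ are finite, so $\sum_{t=1}^\infty r_t/w_t<\infty$ a.s.

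From the proof of Lemma~\ref{lem:V_ub}, $m_{t\to t+1}=\beta w_t/(\beta w_{t+1}+r_{t+1})$, so $\pi_tP_t=\pi_t\beta w_t$ telescopes to $\pi_tP_t=P_0\prod_{s=1}^t(1+r_s/(\beta w_s))^{-1}$; since $0<P_0=\beta w_0<\infty$ and $\sum_s r_s/w_s<\infty$ a.s., this product converges to a limit $L\in(0,\infty)$ a.s. For the numerator, since $\pi_tV_t=\E_t\sum_{s>t}\pi_sr_s$ by \eqref{eq:Vt}, the tower property gives $\E_t[\pi_{t+1}V_{t+1}]=\pi_tV_t-\E_t[\pi_{t+1}r_{t+1}]\le\pi_tV_t$, so $(\pi_tV_t)$ is a nonnegative supermartingale and converges a.s. Summing its decrements yields $\E_0[\pi_tV_t]=V_0-\E_0\sum_{u=1}^t\pi_ur_u$, and since $V_0=\E_0\sum_{u=1}^\infty\pi_ur_u\le P_0<\infty$, letting $t\to\infty$ gives $\E_0[\pi_tV_t]\to0$; by conditional Fatou the a.s.\ limit of $\pi_tV_t$ is $0$. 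Combining, $V_t/P_t=(\pi_tV_t)/(\pi_tP_t)\to 0/L=0$ a.s.

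The delicate part is the summability step. Lemma~\ref{lem:MRT} controls $r_t/w_t$ only above the input threshold $A$ of Assumption~\ref{asmp:ES}; below $A$ the ratio $r_t/w_t$ can be arbitrarily large, and one should not expect $\E_0\sum_t r_t/w_t<\infty$ to hold in general. Two facts rescue the argument: \eqref{eq:cond_A} forces $A_{Ht}/A_{Xt}\to\infty$ a.s., so the periods below the threshold are a.s.\ finite in number and contribute only an a.s.\ finite amount; and the integrability actually needed, namely $\E_0\sum_u\pi_ur_u\le P_0<\infty$, comes for free from the asset-pricing (martingale) structure. This is precisely why the proof is routed through $\pi_t$ rather than applied directly to the bound in Lemma~\ref{lem:V_ub}, and why one also needs the positivity $L>0$ of the limit of $\pi_tP_t$, not merely its existence.
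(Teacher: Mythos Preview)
Your proof is correct, but it follows a genuinely different route from the paper's. The paper works directly with the bound of Lemma~\ref{lem:V_ub}: since $P_t=\beta w_t$, it reduces the claim to showing $\E_t\big[\sum_{s\ge1}r_{t+s}/w_{t+s}\big]\to0$ a.s., and obtains this by using Lemma~\ref{lem:MRT} to bound $r_s/w_s\le C\,(A_{Hs}/A_{Xs})^{1/\sigma-1}$ above the threshold and then invoking condition~\eqref{eq:cond_A} to make the conditional-expectation tail vanish. You instead factor $V_t/P_t=(\pi_tV_t)/(\pi_tP_t)$, show $\pi_tP_t\to L\in(0,\infty)$ via the telescoping product and the pathwise summability $\sum_t r_t/w_t<\infty$, and show $\pi_tV_t\to0$ via nonnegative-supermartingale convergence together with $\E_0[\pi_tV_t]=V_0-\E_0\sum_{u\le t}\pi_ur_u\downarrow0$ and conditional Fatou. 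The paper's argument is shorter and uses only the crude SDF bound $m_{t\to t+s}\le w_t/w_{t+s}$ from Lemma~\ref{lem:V_ub}. Your argument is more structural and, as your final paragraph correctly flags, sidesteps a point the paper leaves implicit: the inequality from Lemma~\ref{lem:MRT} holds only on $\{A_{Hs}/A_{Xs}\ge A\}$, so pushing it through a conditional expectation $\E_t[\cdot]$ (which averages over future paths that may dip below the threshold) needs some care, whereas your route requires only the almost-sure summability of $r_t/w_t$, which follows cleanly from the finitely-many-exceptions argument you give.
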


\begin{proof}
By \eqref{eq:P} and Lemma \ref{lem:V_ub}, we have
\begin{equation*}
    0\le \frac{V_t}{P_t}\le \frac{1}{\beta}\E_t\left[\sum_{s=1}^\infty \frac{r_{t+s}}{w_{t+s}}\right].
\end{equation*}
Therefore to show the claim, it suffices to show that $\E_t\left[\sum_{s=1}^\infty r_{t+s}/w_{t+s}\right]\to 0$ almost surely as $t\to\infty$.

By Assumption \ref{asmp:ES}, we can take a constant $A>0$ such that $\sigma(H,1)\ge \sigma>1$ for all $H\ge A$. Let $A_t\coloneqq A_{Ht}/A_{Xt}$ and $\rho=1/\sigma\in (0,1)$. Since the expectation of the infinite sum \eqref{eq:cond_A} is finite, the sum converges with probability 1 and hence we must have $A_t^{\rho-1}\to 0$ and $A_t\to \infty$ because $\rho<1$. In particular, there exists $T>0$ such that $A_t\ge A$ for $t\ge T$. For such $t$, by Lemma \ref{lem:MRT} we have
\begin{equation*}
    \frac{r_t}{w_t}=\frac{F_X(A_{Ht},A_{Xt})A_{Xt}}{F_H(A_{Ht},A_{Xt})A_{Ht}}\le \frac{F_X}{F_H}(A,1)A^{-\rho}A_t^{\rho-1}.
\end{equation*}
Therefore
\begin{equation*}
    \E_t\left[\sum_{s=1}^\infty \frac{r_{t+s}}{w_{t+s}}\right]\le \frac{F_X}{F_H}(A,1)A^{-\rho}\E_t\sum_{s=1}^\infty A_{t+s}^{\rho-1}.
\end{equation*}
Letting $t\to\infty$ and using condition \eqref{eq:cond_A}, we obtain $\E_t\left[\sum_{s=1}^\infty r_{t+s}/w_{t+s}\right]\to 0$ almost surely as $t\to\infty$.
\end{proof}

\begin{proof}[Proof of Theorem \ref{thm:main}]
The absence of arbitrage and the definition of the fundamental value imply
\begin{align*}
    P_t=\E_t[m_{t\to t+1}(P_{t+1}+r_{t+1})],\\
    V_t=\E_t[m_{t\to t+1}(V_{t+1}+r_{t+1})].
\end{align*}
Taking the difference, we obtain
\begin{equation*}
    P_t-V_t=\E_t[m_{t\to t+1}(P_{t+1}-V_{t+1})].
\end{equation*}
Iterating this equation and applying the law of iterated expectations, we obtain
\begin{equation*}
    P_t-V_t=\E_t[m_{t\to t+s}(P_{t+s}-V_{t+s})].
\end{equation*}
Lemma \ref{lem:VP} implies $V_{t+s}/P_{t+s}\to 0$ almost surely as $s\to\infty$ and hence $P_{t+s}>V_{t+s}$ for large enough $s$ with probability 1. Therefore $P_t>V_t$ for all $t$, and land is overvalued.
\end{proof}

\subsection{Proof of Proposition \ref{prop:ES}}

Let $F$ be as in \eqref{eq:FE}. Then
\begin{align*}
    F_H=&\alpha H^{\alpha-1}E^{1-\alpha}+(1-\alpha)H^\alpha E^{-\alpha}E_H\\
    =&H^\alpha E^{-\alpha}(\alpha H^{-1}E+(1-\alpha)E_H),\\
    F_X=&(1-\alpha)H^\alpha E^{-\alpha}E_X,\\
    F_{HX}=&\alpha(1-\alpha)H^{\alpha-1}E^{-\alpha}E_X\\
    &-\alpha(1-\alpha)H^\alpha E^{-\alpha-1}E_HE_X+(1-\alpha)H^\alpha E^{-\alpha}E_{HX}\\
    =&(1-\alpha)H^\alpha E^{-\alpha}(\alpha H^{-1}E_X-\alpha E^{-1}E_HE_X+E_{HX}).
\end{align*}
Applying Lemma \ref{lem:sigma}, we obtain
\begin{equation*}
    \sigma_F=\frac{F_HF_X}{FF_{HX}}=\frac{(\alpha H^{-1}E+(1-\alpha)E_H)E_X}{E(\alpha H^{-1}E_X-\alpha E^{-1}E_HE_X+E_{HX})}.
\end{equation*}
Therefore
\begin{align*}
    \sigma_F-1&=\frac{E_{HX}}{E_{HX}+\alpha H^{-1}XE^{-1}E_X^2}(\sigma_E-1),
\end{align*}
where the last line uses $E=HE_H+XE_X$. Applying Lemma \ref{lem:sigma} again, we obtain \eqref{eq:sigmaF} and $\sigma_F-1$ and $\sigma_E-1$ have the same sign. \hfill \qedsymbol

\section{Stylized facts}\label{sec:data}

\subsection{U.S. data}\label{subsec:data_US}

To compute the employment share of land-intensive sectors in Figure \ref{fig:US}, we combine two data sources. The first is \emph{Historical Statistics of the United States, Colonial Times to 1970},\footnote{\url{https://www.census.gov/library/publications/1975/compendia/hist_stats_colonial-1970.html
}} Series D167--181 Labor Force and Employment, by Industry: 1800--1960. We use decennial data on employment (Series D170--181) from 1800 to 1920. The second is \emph{Bureau of Economic Analysis GDP \& Personal Income},\footnote{\url{https://www.bea.gov/itable/national-gdp-and-personal-income}} Section 6 Income and Employment by Industry, Tables 6.5A--6.5D Full-Time Equivalent Employees by Industry. We use annual data on employment since 1929.

From this raw data, we define four series that we call ``total'', ``agriculture'', ``mining'', and ``construction''. Before 1929, ``total'' is the sum of series 170--181; ``agriculture'' is the sum of Agriculture (170) and Fishing (171); ``mining'' is Mining (172); ``construction'' is Construction (173). Since 1929, ``total'' is line 1 in NIPA Table 6.5, ``agriculture'' is line 4, ``mining'' is line 7, and ``construction'' is line 12 (since 1948) or 13 (before 1948). Whenever there are duplicate years in the tables, we use data from the more recent table. Finally, we define ``land-intensive'' by the sum of ``agriculture'', ``mining'', and ``construction''. After defining these series, we compute the employment shares of ``agriculture'' and ``land-intensive'' by dividing by ``total''.

Real per capita GDP since 1800 is from \emph{Maddison Project Database 2023}.\footnote{\url{https://www.rug.nl/ggdc/historicaldevelopment/maddison/releases/maddison-project-database-2023}} We convert the amounts to 2023 dollars using the consumer price index.\footnote{\url{https://fred.stlouisfed.org/series/CPIAUCSL}}

\subsection{Cross-country data}\label{subsec:data_country}

To create Figure \ref{fig:country}, we use the World Development Indicator from the World Bank.\footnote{\url{https://wdi.worldbank.org/table/}} We obtain population from Table 2.1. We obtain the employment share of agriculture from Table 2.3 by taking the average for male and female. We obtain the GDP and GDP share of agriculture from Table 4.2. Finally, we compute the per capita GDP by dividing GDP by population. Figure \ref{fig:country} plots these variables for countries without missing values (173 in total).

\printbibliography

\end{document}